\documentclass[12pt]{amsproc}

\usepackage[margin=1in]{geometry}
\usepackage{concmath}
\usepackage{color}

\usepackage{mathtools}
\allowdisplaybreaks[1]
\mathtoolsset{showonlyrefs=true}
\usepackage{dsfont}
\usepackage{amssymb}
\usepackage{enumerate}
\usepackage{setspace}

\newcommand{\novec}{{}}
\renewcommand{\matrix}{{}}

\newcommand{\mL}{\mathcal{L}}
\newcommand{\mN}{\mathcal{N}}

\newcommand{\mS}{\mathcal{S}}

\setlength{\parskip}{1ex plus 0.5ex minus 0.2ex}

\newtheorem{theorem}{Theorem}[section]
\newtheorem{lemma}[theorem]{Lemma}




\title[Mechanism Design for Unicast Service]{Generalized Proportional Allocation Mechanism Design for Unicast Service on the Internet}

\author[Abhinav Sinha and Achilleas Anastasopoulos]{Abhinav Sinha and Achilleas Anastasopoulos \\
EECS Department, University of Michigan, Ann Arbor, MI 48109 \\
\texttt{\{absi,anastas\}@umich.edu}}

\begin{document}
\begin{spacing}{1.2}

\begin{abstract}
In this report we  construct  two mechanisms that fully implement social welfare maximising allocation in Nash equilibria for the case of a single infinitely divisible good subject to multiple inequality constraints. The first mechanism achieves weak budget balance, while the second is an extension of the first, and achieves strong budget balance.
One important application of this mechanism is unicast service on the Internet where a network operator wishes to allocate rates among strategic users in such a way that maximise overall user satisfaction while respecting capacity constraints on every link in the network. The emphasis of this work is on full implementation, which means that all Nash equilibria of the induced game result in the optimal allocations of the centralized allocation problem.
\end{abstract}

\maketitle
\begin{center}
May 4, 2013
\end{center}


\section{Centralised Problem} \label{secCP}

Consider a set $ \mN = \{1,2,\ldots,N\} $, of $ N $ of Internet agents (an agent is considered a pair of source and destination users) that communicate over pre-specified routes on the Internet. Each agent $i\in\mN$, communicates at an information rate $x_i\in \mathbb{R}_{+}$ (where $ \mathbb{R}_{+}$ is the set of non-negative real numbers). Agent's valuation for an overall rate allocation
$ x=(x_i)_{i\in\mN} \in \mathbb{R}_{+}^N $, can be written as
\begin{equation}
\tilde{v}_i(x) = v_i(x_i) \qquad \forall~~i \in \mN
\end{equation}
where $ v_i: \mathbb{R}_{+}\rightarrow \mathbb{R}$, for all $ i \in \mN $, which indicates that
agent $i$'s satisfaction only depends on its own information rate allocation $x_i$.
Due to capacity constraints on the utilized links, allocation to agents is constrained
by a number of inequality constraints. It is assumed that although some agents may share information content (e.g., watch the same video stream), a separate data stream is transmitted for each agent.
This transmission technique is referred to as unicast service.

Each agent has a fixed pre-determined route. The route $ \mL_i $ for agent $ i $ is the set of links that agent $ i $ uses for his communication, and $ \mL = \cup_{i \in \mN} \mL_i $ is the set of all available links. We also define the sets of agents utilizing link $l\in\mL$ as $\mN^l = \{i \in \mN ~\vert~ l \in \mL_i \}$. Finally, for any agent $ i $ we denote $ L_i = \vert \mL_i \vert $ and for any link $ l $ we denote $ N^l = \vert \mN^l \vert $.

The network administrator is interested in maximizing the social welfare under the link capacity constraints. This centralized problem~\eqref{CP} is formally defined below.
\begin{alignat}{2}\label{CP}
            &\max_x \sum_{i \in \mN} v_i(x_i) \tag{CP}\\
\label{C1}
\text{s.t.} &\quad x_i \ge 0 \quad \forall~~ i \in \mN \tag{C$_1$} \\
\label{C2}
\text{and}  &\quad \sum_{j \in \mN^l} \alpha^l_j x_j \le c^l \quad  \forall~~ l \in \mL \tag{C$_2$} \\
\end{alignat}
Specifically, constraints \eqref{C2} are the inequality constraints on allocation, which as mentioned above, can be interpreted as capacity constraint for every link $ l \in \mL$, in the network.
In this interpretation $ \alpha^l_j $ would be representative of the QoS requirement of agent $ j $ combined with the specific architecture on link $ l $.
As an example, $\alpha^l_j=\frac{1}{R_j(1-\epsilon^l_{j})}$ for all links $l\in\mL_j$, where $\epsilon^l_{j}$ represents the packet error probability for link $l$ for a packet encoded with channel coding rate $R_j$.

\subsection{Assumptions} Our analysis would be done under the following assumptions.

\begin{enumerate}
\item[(A1)] For all agents, $ v_i(\cdot) \in \mathcal{V}_i $, where the sets $\mathcal{V}_i$ are arbitrary subsets of $\mathcal{V}_0$, the set of strictly increasing, strictly concave, twice differentiable functions $\mathbb{R}_{+}\rightarrow \mathbb{R}$ with continuous second derivative.

\item[(A2)] $ v_i^{\prime}(0) $ is finite $ \forall $ $ i \in \mN $. This also implies that $ v_i^{\prime}(x) $ is finite and bounded $ \forall $ $ i $ and $ \forall $ $ x $ since $ v_i $'s are concave.

\item[(A3)] There are at least two agents on each link i.e. $ N^l \ge 2 $ $ ~\forall $ $ l \in \mL $.

\item[(A4)] The optimal solution of \eqref{CP} has at least two non-zero components at each link i.e. if $ S(x) \coloneqq \{ i \in \mN ~\vert~ x_i > 0 \} $ and $ S^l(x) \coloneqq S(x) \cap \mN^l $ then we assume $ \vert S^l(x^{\star}) \vert \ge 2 $ $ ~\forall $ $ l \in \mL $. (where $ x^{\star} $ is the solution of \eqref{CP})
\end{enumerate}
In addition, the coefficients in \eqref{C2} are all strictly positive, i.e. $ \alpha^l_j > 0 $ $ \forall $ $ j \in \mN^l $, $ \forall $ $ l \in \mL $. Also, for well-posedness of the problem we take $ c^l > 0 $ $ ~\forall $ $ l \in \mL$.

Assumption (A1) is made in order for the centralized problem to have a unique solution and for this solution to be characterized by the KKT conditions.
(A2) is a mild technical assumption that is required in the proof of Lemma~\ref{lemexis}.
Assumption (A3) is  made in order to avoid situations where there is a link constraint involving only one agent.
Such case requires special handling in the design of the mechanism (since in such a case there is no contention at the link), and destructs from the basic idea
that we want to communicate.
Finally (A4) is related to (A3) and is made in order to simplify the exposition of the proposed mechanism,
without having to define corner cases that are of minor importance.
We will present our proposed mechanism in a concise way using assumption (A4) and generalisations will be discussed separately in Section \ref{secgen}. 

\subsection{Necessary and Sufficient Optimality conditions} We can write the following four KKT conditions, which are generally necessary, but in our case (due to all constraints being affine and strict concavity of $ v_i $) they will also be sufficient. For that we first define the Lagrangian
\begin{equation}
L(x,\lambda, \nu) = \sum_{i \in \mN} v_i(x_i) - \sum_{l \in \mL} \lambda_l \left (\sum_{j \in \mN^l} \alpha^l_j x_j - c^l \right ) + \sum_{i \in \mN} \nu_i x_i
\end{equation}
We will write KKT conditions without explicitly referring to $ \nu_i $'s and just using the fact that $ \nu_i^{\star} \ge 0 $ and $ \nu_i^{\star} x_i^{\star} = 0 $ $ ~~\forall~~i \in \mN$.
With the assumptions above, it's easy to see that the KKT conditions below will give rise to a unique $ x^{\star} $
as the optimiser for \eqref{CP}.

\underline{KKT conditions}:
\begin{enumerate}
\item[a)] \textsf{Primal Feasibility: }
\begin{equation}
x_i^{\star} \ge 0 \quad  \forall~~ i \in \mN  \quad  \text{and} \quad   \sum_{j \in \mN^l} \alpha^l_j x_j^{\star} \le c^l \quad  \forall~~ l \in \mL
\end{equation}

\item[b)] \textsf{Dual Feasibility: } $ \lambda^{\star}_l \ge 0~~ $ $ ~~\forall $ $ l \in \mL $

\item[c)] \textsf{Complimentary Slackness: }
\begin{equation}
\lambda^{\star}_l \left( \sum_{j \in \mN^l} \alpha^l_j x_j^{\star} - c^l \right) = 0 \quad \forall ~~ l \in \mL
\end{equation}

\item[d)] \textsf{Stationarity: }
\begin{gather}
v_i^{\prime}(x_i^{\star}) = \sum_{l \in \mL_i} \lambda^{\star}_l \alpha^l_i \quad \forall~~ i \in \mN \quad \text{s.t.} \quad x_i^{\star} > 0 \\
v_i^{\prime}(x_i^{\star}) \le \sum_{l \in \mL_i} \lambda^{\star}_l \alpha^l_i \quad \forall~~ i \in \mN \quad \text{s.t.} \quad x_i^{\star} = 0
\end{gather}

\end{enumerate}

\section{Different Formulations of the Centralised Problem}

The designer's task is to ensure the above optimum allocation is made. This clearly requires the knowledge of $ v_i $ even when constraints \eqref{C1} and \eqref{C2} are completely known. The premise of the problem, however, is that we are dealing with agents who are \textrm{\textit{strategic}} and for each of whom, their own valuation function $ v_i(\cdot) $ is their private information.
One way forward for the designer could be to simply ask each agent to report their private information and announce the solution of \eqref{CP}, with reported functions in place of $ v_i $, for allocation.
Apart from the fact that asking to report a function creates a practical communication problem,
the main problem with this is that the agents could report untruthfully and end up getting a strictly better
allocation (e.g., by reporting a $ v_i $ which has higher derivative than original at every point).
In mechanism design terminology, the allocation function arising out of \eqref{CP} isn't even partially
implementable\footnote{This can be deduced from the revelation principle.
Indeed if there was a mechanism that even partially implements
the allocation function arising out of \eqref{CP},
then there would exist also a truthful implementation. However, as
shown with the above example, such an implementation will always fail.}
Restricting ourselves to a certain class of utility functions (quasi-linear utilities),
provides the additional flexibility of penalising agents for reporting untruthfully by imposing taxes/subsidies.
In this way, another related problem is created which is implementable, and which is equivalent to \eqref{CP} as far as allocation is concerned. This leads us to the following additional
assumption about users' utilities
\begin{itemize}
\item[(A5)] All agents have {{quasi-linear}} utilities, i.e. the overall utility functions can be expressed as
\begin{equation} \label{EQUT}
u_i(x,t) = v_i(x_i) - t_i \qquad \forall~~i \in \mN
\end{equation}
where in addition to allocation we have introduced taxes $ t=(t_i)_{i\in\mN} \in \mathbb{R}^N $.
\end{itemize}
Note that under assumption (A5), agent $ i $ pays tax if $ t_i > 0 $ and receives a subsidy if $t_i<0$.
Taxes affect utilities linearly and overall utility itself is valuation after adjustment for taxes
(total monetary representation of one's state of happiness).

Because we talk about social welfare as our main objective, the centralised problem \eqref{CP} isn't complete until we fix who owns the good that is being allocated. Then one will have to further check whether including their welfare in the objective function changes the optimum allocation. As it turns out, under the assumption of \textrm{\textit{quasi-linear}} utilities and cost of providing the good being zero for the owner, optimum doesn't change even if we involve the seller's welfare. In this regard, there are two interesting ways of reformulating \eqref{CP}, as elaborated below.

\subsection{First Reformulation of CP: Weak budget balance}

We now introduce agent $ 0 $ as the owner of the good (called the seller). The seller doesn't have any costs for producing and providing the good, i.e. his valuation is the zero function. This could be interpreted as the good being already produced and ready to be provided, so those costs don't come into consideration for the seller as well as the designer. His utility is linear (since valuation is zero) and his revenue is the total tax paid by the agents, $\sum_{i\in\mN} t_i $.

We define centralised problem \eqref{CP1} as
\begin{gather}\label{CP1}
\max_{x,t} \sum_{i \in \mN} u_i(x,t) + \sum_{j \in \mN} t_j \tag{CP$_1$}\\
\text{s.t.} \quad \eqref{C1} \quad \text{and} \quad \eqref{C2} 
\end{gather}
where now, instead of just taking agent's valuations into account, we maximise the sum of their overall utilities, with the addition of seller's utility (which is only his revenue) - each agent pays a tax $ t_i $, all of which goes to the seller, who has no valuation and therefore has utility equal to sum of taxes.
Anticipating that a rational seller will only sell if his revenue is non-negative
we can add a weak budget balance (WBB) constraint, which states
\begin{equation} \label{wbb}
 \sum_{j\in \mN} t_j \ge 0 \tag{WBB}.
\end{equation}

\subsection{Second Reformulation of CP: Strong budget balance}

In this case, in contrast to \eqref{CP1}, there is no separate seller.
We can alternatively say that the agents are themselves the owners of the good and are only
looking to distribute the good (which they collectively own) in a way such that sum of utilities is maximised.
Therefore strong budget balance \eqref{sbb}  constraint is needed.
This means that for the system $ \mN $, no money has been introduced from the outside and the agents wish
that no excess money remains undistributed after the allocation either.

The new centralised problem \eqref{CP2} resulting from the above interpretation can be stated as
\begin{gather}\label{CP2}
\max_{x,t} \sum_{i \in \mN} u_i(x,t) \tag{CP$_2$}\\
\text{s.t.} \quad \eqref{C1} \quad \text{and} \quad \eqref{C2} \\
\label{sbb}
\text{and} \quad \sum_{i \in \mN} t_i = 0 \tag{SBB}.
\end{gather}

\subsection*{}
The two problems defined above will be shown to be equivalent to \eqref{CP} where since the original problem~\eqref{CP} did not involve taxes, we will talk of equivalence only in terms of optimum allocation, $ x^{\star} $. Note that due to different conditions on taxes in the two, two different mechanisms will be needed to implement them.

It is straightforward to see that \eqref{CP2} and \eqref{CP} are completely equivalent - due to constraint $ SBB $, the objective for \eqref{CP2} is independent of $ t $ and is exactly the same as objective for \eqref{CP}, with same remaining constraints. Now for \eqref{CP1} and \eqref{CP2}, since the constraints on $ x $ are the same in \eqref{CP1} and \eqref{CP2} and the $ x- $dependent part of the objective in \eqref{CP2} in independent of $ t $ and is the same as the objective of \eqref{CP1}, we can see that \eqref{CP1} and \eqref{CP2} are equivalent. The two equivalences above automatically give the third one i.e. \eqref{CP} and \eqref{CP1}. 

The above equivalences mean that not only will $ x^{\star} $ be the same, but also that the necessary and sufficient conditions describing it will be the same i.e. KKT conditions, for $ x^{\star} $ and $ \lambda^{\star} $, will be exactly the same for all three problems (additionally we will show $ WBB $ and $ SBB $ constraints to be satisfied in respective formulations). This fact will be used in Sections \ref{secmech}, \ref{secmechown} where the KKT conditions from Section \ref{secCP} will be treated as if they have been written for \eqref{CP1}, \eqref{CP2}, respectively.

In Section \ref{secmech}, we will present a mechanism that fully implements \eqref{CP1} in Nash Equilibria (NE), while in Section \ref{secmechown} we will modify our mechanism to fully implement \eqref{CP2} in NE.

\section{A Mechanism with Weak Budget Balance} \label{secmech}
In this section we refer to \eqref{CP1} as the centralised problem.
So we have all the agents in $ \mN $ plus the seller and social welfare is in terms of everyone's
utility (including seller's).

We will define a mechanism, in a way that doesn't require knowledge of $ v_i $, whose
game-form will have NE in pure strategies such that the allocation which corresponds to the
equilibria of the game-form is same across all equilibria and is equal to the unique optimiser
of \eqref{CP1}, $ x^{\star} $.
In addition, the mechanism will be such that everyone involved (including the seller) will be weakly
better-off at equilibrium than not participating at all.

\subsection{Information assumptions}

Assume that $ v_i(\cdot) $ is a private information of agent $ i $ and nobody else knows it\footnote{This
assumption is crucial because it raises the question of the validity of NE as a solution concept of the resulting
game, since that would require that all agents have complete information about everyone's utilities. We believe this
ia a serious problem in this entire line of research and that a Bayesian formulation would be more appropriate.
However, in this work we accept the justification--weak in our opinion--given by Reichelstein and Reiter
in \cite{reiter} and Groves and Ledyard in \cite{groves}.}.
Let $ \mathcal{I}_c $ be the set of common information between all agents, containing the information about full rationality of each agent.
Finally, let $ \mathcal{I}_d $ be the knowledge of the designer, containing the information about constraints \eqref{C1}, \eqref{C2}, the fact that $\mathcal{V}_i \subset \mathcal{V}_0,\ \forall i\in\mN$ and that the seller has $ 0 $ valuation.

\subsection{Mechanism}

Formally, we have a set of environments $ \mathcal{V} = \times_{i \in \mN} \mathcal{V}_i $.
We have seen from KKT, how each element of $\mathcal{V} $ can be mapped to an allocation $ x^{\star} $ which
maximises social welfare for that set of utilities.
The allocation $ x^{\star} $ achieves the maximum of \eqref{CP1}, and correspondingly any
tax $ t \in \mathbb{R}^{N} $  satisfying~\eqref{wbb} would do.

In our mechanism, the designer would define an action (message) space $\mS_i$ for each agent $i\in\mN$.
We denote $ \mS = \times_{i \in \mN} \mS_i $ the set of action profiles for all agents.
In addition the designer defines and announces the \textrm{\textit{contract}} $ h: \mS \rightarrow \mathbb{R}_{+}^N \times \mathbb{R}^N $ that maps every vector of messages received from the agents into an allocation vector and a tax vector. The designer would then ask every agent $ i \in \mN $ to choose a message from the set $ \mS_i $ based on which allocations (and taxes) would be made. The seller is not asked to take any action, so as far as strategic decision making is concerned, we don't need to consider him any further. It is implicit in our mechanism in this section that when the tax $ t $ is imposed, the seller gets revenue (or utility) of $ \sum_{i\in\mN} t_i $.

Specifically, the designer would ask each agent to report $s_i= (y_i, \novec{p}_i) $, $ \novec{p}_i = (p_i^l)_{l \in \mL_i}$.
This includes their demand for the good and the ``price'', for each constraint they are involved in, that they believe everyone should pay.
This means $ \mS_i = \mathbb{R}_{+} \times \mathbb{R}_{+}^{L_i} $.
For received messages $ s = (s_1, \ldots, s_N) = (\novec{y}, \matrix{P}) = (y_1, \ldots, y_N, \novec{p}_1 \ldots, \novec{p}_N ) $ the
contract $ h_i(s) = (h_{x,i}(s), h_{t,i}(s)) $ will be defined for each $ i \in \mN $ as follows.

If the received demand vector is $ y = (y_1, \ldots, y_N) = 0 $ the allocation is $ x= (x_1, \ldots, x_N) = 0 $.
Otherwise it is evaluated by first generating a scaling factor $r$ through
\begin{equation} \label{EQx0}
r = \min_{l \in \mL} r^l,
\end{equation}
where
\begin{align} \label{EQA1}
r^l &= \left\{
\begin{array}{ll}
\frac{c^l}{\sum_{j \in \mN^l} \alpha^l_j y_j }, & \quad \text{if }~ \vert S^l(y) \vert \ge 2 \\
\frac{c^l}{\sum_{j \in \mN^l} \alpha^l_j y_j } - f^l(y_i), & \quad \text{if }~ S^l(y) = \{i\} \\
+ \infty  & \quad \text{if }~ \vert S^l(y) \vert = 0
\end{array}
\right.
\end{align}
with
\begin{equation}
f^l(y_i) = \frac{c^l}{\alpha_{i}^ly_i (y_i + 1)}.
\end{equation}
Using these previously defined quantities, the allocation and taxes are
\begin{align} \label{EQx}
h_{x,i}(s) = x_i = r y_i 
\end{align}
\begin{gather}
\label{EQt}
h_{t,i}(s) = t_i = \sum_{l \in \mL_i} t_i^l \\
t_i^l =  x_i \alpha^l_i \bar{p}_{-i}^l  + (p_i^l -  \bar{p}_{-i}^l)^2  + \eta \, \bar{p}_{-i}^l \left( p_i^l -  \bar{p}_{-i}^l \right) \left(c^l - \sum_{j \in \mN^l} \alpha^l_jx_j \right),
\end{gather}
where $ \eta $ is a small enough positive constant (described in proof of Lemma~\ref{lemexis}) for any link $ l \in \mL_i $ we define $ \bar{p}_{-i}^l $ by
\begin{align} \label{EQpm}
\bar{p}_{-i}^l \coloneqq
\frac{1}{\vert \mN^l \backslash \{i\} \vert} \sum_{j \in \mN^l \backslash \{i\}} p_j^l = \frac{1}{N^l - 1} \sum_{j \in \mN^l \backslash \{i\}} p_j^l.
\end{align}
The quantity $ \bar{p}_{-i}^l $ is calculated by averaging the quoted prices for link $ l $ over all
agents other than $ i $ who use that link (note that due to assumption (A3) every link has at least 2 agents). 
The interpretation of prices $ p_j^l $ in this mechanism is closely related to agent $ j $'s willingness to pay for consuming resource on link $ l $.
Since we have the problem of information elicitation for each agent's type ($ v_i $), quoting of prices and demand is used as a way of eliciting $ v_i^{\prime}(x_i) $ by comparing it appropriately with prices. 

The quantity $ h_{x,i}(s) $ creates allocation by dilating/shrinking a given demand vector $ \novec{y} $ on to one of the hyperplanes defined
by the constraints in \eqref{C2}, specifically, that hyperplane for which the corresponding allocation is the closest to origin
(the allocation could also be at the intersection of multiple hyperplanes).
Another way to describe this is to say that $ h_{x,i} $ dilates/shrinks $ \novec{y} $ to the boundary of the feasible region defined by
\eqref{C2} constraints. Since all the $ \alpha^l_j $'s are positive, this means that all other constraints in \eqref{C2} are satisfied for the
allocation automatically (shown later).
Additionally, the separate definition for $ r^l $ when $ \vert S^l(y) \vert \leq 1 $ is to ensure (as it will be shown later) that there are no equilibria
with $ \vert S^l(y) \vert \leq 1 $. This is required since we are only dealing with achieving solutions\footnote{Note that for the given allocation function, $ \vert S^l(y) \vert  = 0, 1 $ is equivalent to $ \vert
S^l (x) \vert = 0, 1 $.}
 to \eqref{CP} which satisfy assumption (A4).

The mechanism gives rise to a one-shot game $ \mathcal{G} $, played by all the agents in $ \mN $,
where action sets are $ (\mS_i)_{i \in \mN} $ and utilities are given by
\begin{equation}
\hat{u}_i(s) = v_i(x_i) - t_i = v_i(h_{x,i}(s)) - h_{t,i}(s) \qquad \forall~~i \in \mN
\end{equation}
We will say that maximising social welfare for \eqref{CP1} has been \textrm{\textit{fully implemented in NE}},
if all outcomes (all possible NE) of this game  produce allocation $ x^{\star} $ and all agents in $ \mN $
plus the seller are better-off participating in the mechanism than opting out (getting $ 0 $ allocation and
taxes). The second property is known as \textrm{\textit{individual rationality}}.
%

\subsection{Results}
\begin{theorem}[Full Implementation] \label{thmain}
For game $ \mathcal{G} $, there is a unique allocation, $ x $, corresponding to all NE. Moreover, $ x = x^{\star} $, the maximiser of \eqref{CP}. In addition, individual rationality is satisfied for all agents and for the seller.
\end{theorem}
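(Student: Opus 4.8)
The plan is to establish the three assertions in turn: (i) every pure‑strategy NE of $\mathcal{G}$ induces an allocation and a price profile satisfying the KKT conditions of \eqref{CP1}, so that by uniqueness of the optimizer they must equal $(x^{\star},\lambda^{\star})$; (ii) such an equilibrium actually exists; and (iii) individual rationality and \eqref{wbb} hold at every NE. Part (ii) I would obtain by constructing a message profile from the KKT point $(x^{\star},\lambda^{\star})$ and verifying it is a best response for each agent — this is precisely the content of Lemma~\ref{lemexis}, which also fixes how small $\eta$ must be — so the substantive work lies in Part (i).

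For the price side of Part (i), fix a NE $s=(\novec y,\matrix P)$ with induced allocation $x$ and slacks $D^l:=c^l-\sum_{j\in\mN^l}\alpha^l_j x_j\ge 0$. Since $x$, and hence $D^l$, depend only on the demands $\novec y$, the variable $p_i^l$ enters agent $i$'s tax only through the convex quadratic $(p_i^l-\bar p_{-i}^l)^2+\eta\,\bar p_{-i}^l(p_i^l-\bar p_{-i}^l)D^l$, whose minimizer satisfies $p_i^l=\bar p_{-i}^l\,(1-\tfrac{\eta}{2}D^l)$ for every $i\in\mN^l$. Writing $\bar p_{-i}^l=(P^l-p_i^l)/(N^l-1)$ with $P^l=\sum_{j\in\mN^l}p_j^l$ and solving the resulting linear system shows all quoted prices on link $l$ coincide, say $p_i^l=p^l\ge 0$; substituting back yields $(N^l-1)(1-\kappa)=0$ with $\kappa=1-\tfrac{\eta}{2}D^l$ whenever $p^l>0$, so by $N^l\ge 2$ one gets $p^l>0\Rightarrow D^l=0$. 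Identifying $p^l$ with the candidate $\lambda_l$, this is exactly dual feasibility and complementary slackness; primal feasibility ($x_i\ge0$, $D^l\ge0$) holds by construction of $r$ in \eqref{EQx0}–\eqref{EQx}.

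For the demand side, at this common‑price profile $p_i^l=\bar p_{-i}^l=p^l$, so both the quadratic and cross terms vanish and the tax collapses to $t_i=x_i\sum_{l\in\mL_i}\alpha^l_i p^l$; since a unilateral change in $y_i$ alone keeps $p_i^l-\bar p_{-i}^l=0$, this linear form persists along demand deviations. Thus agent $i$ effectively maximizes $v_i(x_i)-x_i\sum_{l\in\mL_i}\alpha^l_i p^l$ over the values of $x_i=ry_i$ attainable as $y_i$ ranges over $\mathbb{R}_+$, and concavity of $v_i$ gives $v_i'(x_i)=\sum_{l\in\mL_i}\alpha^l_i p^l$ when $x_i>0$ and $v_i'(0)\le\sum_{l\in\mL_i}\alpha^l_i p^l$ when $x_i=0$, i.e. stationarity. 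All four KKT conditions then hold, forcing $x=x^{\star}$ and $p^l=\lambda_l^{\star}$. Part (iii) is then quick: the safe deviation $y_i=0,\ p_i^l=\bar p_{-i}^l$ yields $x_i=0$, $t_i=0$, so the equilibrium utility is at least $v_i(0)$ (agent IR), while $\sum_{i\in\mN}t_i=\sum_i x_i\sum_{l\in\mL_i}\alpha^l_i p^l\ge 0$ gives \eqref{wbb} and seller IR.

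The hard part will be the demand analysis buried in Part (i): because all allocations are coupled through the single scaling factor $r=\min_l r^l$, the map $y_i\mapsto x_i$ is nonlinear and only piecewise smooth (the minimizing link may switch), so I must show it is continuous and sweeps an interval $[0,\bar x_i)$ rich enough that the concave program over $x_i$ genuinely reproduces stationarity, including the boundary case $x_i=0$. I must also rule out equilibria with $\vert S^l(y)\vert\le 1$ by exhibiting a profitable deviation created by the corrective term $f^l$ in \eqref{EQA1}. The remaining technical point is choosing $\eta$ small enough, as in Lemma~\ref{lemexis}, to keep the price best response non‑negative and interior.
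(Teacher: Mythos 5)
Your overall architecture is the same as the paper's: show that every NE satisfies the KKT system of \eqref{CP} (primal feasibility by construction of $r$, then equal prices, dual feasibility, complementary slackness, stationarity), invoke uniqueness of the KKT point to force $x=x^{\star}$ and $p^l=\lambda_l^{\star}$, prove existence by building a profile from $(x^{\star},\lambda^{\star})$, and check individual rationality and \eqref{wbb}. Within that skeleton, two of your local arguments genuinely differ from the paper's, and both are sound. For the price side, the paper exhibits explicit profitable deviations in two separate lemmas (Lemma~\ref{lemcmp}: drop $p_j^q$ to $\bar{p}_{-j}^q$; Lemma~\ref{lemCompSlac}: drop $p^q$ by $\epsilon$), and those arguments work for every $\eta>0$; you instead impose the best-response first-order condition $p_i^l=\bar{p}_{-i}^l\left(1-\tfrac{\eta}{2}D^l\right)$, with $D^l=c^l-\sum_{j\in\mN^l}\alpha_j^l x_j$, and solve the linear system, which delivers equal prices and complementary slackness in one stroke. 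The cost is the boundary of the message space: the unconstrained minimizer is negative when $\eta D^l>2$, so you must either observe that $D^l\le c^l$ (hence $\eta<2/\max_l c^l$ suffices, a restriction on $\eta$ that the paper never needs for the necessity direction) or handle the corner $p_i^l=0$ directly, where equal prices and slackness hold trivially. For individual rationality, your opt-out deviation $y_i=0$, $p_i^l=\bar{p}_{-i}^l$, whose utility is exactly $u_i(0,0)$, is cleaner than the paper's Lemma~\ref{lem_ir}, which instead uses stationarity plus strict concavity of $f(z)=v_i(z)-z\sum_{l\in\mL_i}\alpha_i^l p^l$; your version does not even require the equal-price lemma.

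That said, what you call ``the hard part'' is precisely where the paper's real work lies, and your proposal only names it rather than doing it: the continuity and strict monotonicity of $y_i\mapsto x_i$ is the content of the paper's Lemma~\ref{lemStat}, whose case analysis (agent off the bottleneck link; on it with $\vert S^q(y)\vert\ge 2$; on it with $S^q(y)=\{i\}$, where the correction $f^l$ yields $\partial x_i/\partial y_i = c^q/\alpha_i^q(y_i+1)^2>0$) is exactly the computation you defer; and existence, with its Hessian analysis fixing how small $\eta$ must be, is Lemma~\ref{lemexis}, which you cite rather than prove. As a standalone proof your text is therefore incomplete at these two points; as a plan it is correct and closes once those computations are supplied.
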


The theorem will be proved by a sequence of results, in which we will characterise all candidate NE of $ \mathcal{G} $ by necessary conditions until
we are left with only one family of NE candidates.
We will then show that $ \mathcal{G} $ has NE in pure strategies, and that all of them result in allocation $ x = x^{\star} $.
Finally, individual rationality will be checked.


\begin{lemma}[Primal Feasibility] \label{lemPF}
For any action profile $ s = (\novec{y}, \matrix{P}) $ of game $ \mathcal{G} $, constraints \eqref{C1} and \eqref{C2} are satisfied at the corresponding allocation.
\end{lemma}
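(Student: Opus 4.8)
The plan is to verify the two constraints directly from the definition of the allocation $h_{x,i}(s)=x_i=ry_i$ in~\eqref{EQx}, separating the degenerate case $\novec{y}=0$ from the generic case. If $\novec{y}=0$ then $x=0$, so \eqref{C1} holds trivially and every left-hand side of \eqref{C2} vanishes, giving $0\le c^l$. Hence everything reduces to the case $\novec{y}\neq 0$, where I must show that the scaling factor $r$ defined in~\eqref{EQx0} is, first, strictly positive and finite, and second, no larger than $c^l/\sum_{j\in\mN^l}\alpha^l_j y_j$ at every link carrying positive demand.

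For \eqref{C1} it suffices to check $r>0$, since the reported demands satisfy $y_i\ge 0$ by definition of $\mS_i=\mathbb{R}_{+}\times\mathbb{R}_{+}^{L_i}$. I would go through the three branches of~\eqref{EQA1}: when $\vert S^l(y)\vert\ge 2$ the denominator $\sum_{j\in\mN^l}\alpha^l_j y_j$ is strictly positive (at least two positive demands, all $\alpha^l_j>0$), so $r^l>0$; when $S^l(y)=\{i\}$ the sum collapses to $\alpha^l_i y_i$ and a short simplification gives $r^l=c^l/\big(\alpha^l_i(y_i+1)\big)>0$; and when $\vert S^l(y)\vert=0$ we have $r^l=+\infty$. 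Since $\novec{y}\neq 0$, some agent $i$ with $y_i>0$ lies on some link $l\in\mL_i$, so at least one branch is finite and $r=\min_l r^l$ is a positive, finite number; therefore $x_i=ry_i\ge 0$ for all $i$.

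For \eqref{C2} the key observation is that $r\le r^l$ for every $l$, and in each branch $r^l\le c^l/\sum_{j\in\mN^l}\alpha^l_j y_j$: this is an equality in the $\vert S^l(y)\vert\ge 2$ branch, a strict inequality in the $S^l(y)=\{i\}$ branch because the subtracted term $f^l(y_i)=c^l/\big(\alpha^l_i y_i(y_i+1)\big)$ is strictly positive, and vacuous in the $\vert S^l(y)\vert=0$ branch since there the demand sum is zero. Consequently $\sum_{j\in\mN^l}\alpha^l_j x_j=r\sum_{j\in\mN^l}\alpha^l_j y_j\le c^l$ at every link, with equality possible only at links attaining the minimum in~\eqref{EQx0}.

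The calculation is routine; the only point requiring care is the single-agent branch $S^l(y)=\{i\}$, where one must confirm that the correction $-f^l(y_i)$ neither drives $r^l$ negative (it does not, by the simplification above) nor lets the resulting allocation exceed capacity (it cannot, since subtracting a positive quantity only tightens the bound). I would also note explicitly that the $+\infty$ branch is harmless precisely because an empty support forces the corresponding left-hand side of \eqref{C2} to be zero, so finiteness of $r^l$ is never needed there.
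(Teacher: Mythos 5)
Your proof is correct and follows essentially the same route as the paper's: dispose of $\novec{y}=0$ trivially, then for $\novec{y}\neq 0$ use $r\le r^l$ together with $r^l\le c^l/\sum_{j\in\mN^l}\alpha^l_j y_j$ in each branch of \eqref{EQA1}, handling empty-support links separately. Your explicit verification that $r>0$ (via the simplification $r^l=c^l/\bigl(\alpha^l_i(y_i+1)\bigr)$ in the singleton branch), which the paper dismisses with ``clearly always satisfied'' for \eqref{C1}, is a slightly more careful rendering of the same argument, not a different one.
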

\begin{proof}
Constraint \eqref{C1} is clearly always satisfied.
For $ y = 0 $, constraint \eqref{C2} is also clearly satisfied.
We will now show \eqref{C2} is satisfied for any $ y \ne 0 $.
In that case $r<+\infty$ (since there exists at least one link $q$ with $\vert S^q(y) \vert\geq 1$ and thus  $r^q<+\infty$).
Now, for any link $ l $, we have the following two cases.
If $ \vert S^l(y) \vert = 0 $ then the allocation to agents on that link is clearly zero (since $ x_i = r y_i $ and $y_i = 0$),
so \eqref{C2} for those links is  satisfied.
If $ \vert S^l(y) \vert \ge 1 $ we have
\begin{gather}
\sum_{j \in \mN^l} \alpha^l_j x_j = r \sum_{j \in \mN^l} \alpha^l_j y_j \le r^l \sum_{j \in \mN^l} \alpha^l_j y_j \le \frac{c^l}{\sum_{j \in \mN^l} \alpha^l_j y_j} \sum_{j \in \mN^l} \alpha^l_j y_j = c^l
\end{gather}
where the first inequality holds because $ r $ is the minimum of all $ r^l $'s. The second inequality will be equality if $ \vert S^l(y) \vert \ge 2 $ and will be strict only if $ \vert S^l(y) \vert = 1 $ (see second sub-case in eq. \eqref{EQA1}).
\end{proof}

Feasibility of allocation for action profiles is a direct consequence of using projections of demand $ y $ on to the feasible region.
Now we will prove that all agents using a link quote the same price for it, this is brought about by the 2nd tax term
$ \sum_{l \in \mL_i} (p_i^l - \bar{p}_{-i}^l)^2 $. This is a way of penalizing users with higher taxes just for quoting
a different price than average, at each link.

\begin{lemma} \label{lemcmp}
At any NE $ s = (\novec{y}, \matrix{P}) $ of $ \mathcal{G} $, price for any link, is same for all agents, i.e.,  $ {p}_i^l = p^l $ $~ \forall$ $ i \in \mN^l $, $ ~\forall $ $ l \in \mL $ (where we denote the common price at any link $ l $ by $ p^l $).
\end{lemma}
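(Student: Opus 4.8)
The plan is to exploit the fact that, in this mechanism, the announced prices $p_i^l$ never enter the allocation rule: by \eqref{EQx0}--\eqref{EQx} the allocation $x$ depends only on the demand vector $\novec{y}$. Consequently, when agent $i$ contemplates a unilateral deviation in its price $p_i^l$ for a fixed link $l$ (holding $\novec{y}$ and all other announcements fixed), its valuation term $v_i(x_i)$ is unaffected, and among the tax components $\sum_{l' \in \mL_i} t_i^{l'}$ only $t_i^l$ changes, since each $t_i^{l'}$ involves agent $i$'s own price only through $p_i^{l'}$. Therefore the best-response problem in the prices decouples across links and, for each link $l$, reduces to minimizing $t_i^l$ over $p_i^l \ge 0$ with everything else held constant.

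First I would isolate the dependence of $t_i^l$ on $p_i^l$. Writing $a := \bar{p}_{-i}^l$ and $b := c^l - \sum_{j \in \mN^l} \alpha^l_j x_j$, both of which are constants from agent $i$'s viewpoint (the slack $b$ is fixed because $x$ is a function of $\novec{y}$ alone), the term $x_i \alpha^l_i \bar{p}_{-i}^l$ is constant in $p_i^l$ and the remaining part is
\[
(p_i^l - a)^2 + \eta\, a\, b\,(p_i^l - a).
\]
This is a strictly convex quadratic in $p_i^l$ with leading coefficient $1$, so it has a unique unconstrained minimizer, obtained from the first-order condition $2(p_i^l - a) + \eta a b = 0$, namely
\[
p_i^l = a\Bigl(1 - \tfrac{\eta b}{2}\Bigr) = \beta\, \bar{p}_{-i}^l,
\qquad \beta := 1 - \tfrac{\eta}{2}\Bigl(c^l - \sum_{j \in \mN^l}\alpha^l_j x_j\Bigr).
\]
Here I would invoke Lemma~\ref{lemPF} to note $0 \le b \le c^l$, so that choosing $\eta$ small enough (as stipulated in the definition of the mechanism) forces $\beta \in (0,1]$; since $\bar{p}_{-i}^l \ge 0$, the minimizer is automatically nonnegative and the constraint $p_i^l \ge 0$ is slack. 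Hence the best response of agent $i$ on link $l$ is exactly $p_i^l = \beta\, \bar{p}_{-i}^l$.

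Next I would impose these best-response equations simultaneously for all $i \in \mN^l$ and solve the resulting linear system. Since $b$, and therefore $\beta$, is common to every agent on link $l$, substituting $\bar{p}_{-i}^l = \frac{1}{N^l-1}\bigl(S^l - p_i^l\bigr)$ with $S^l := \sum_{j \in \mN^l} p_j^l$ gives $(N^l - 1 + \beta)\,p_i^l = \beta\, S^l$, whence
\[
p_i^l = \frac{\beta\, S^l}{N^l - 1 + \beta} \qquad \forall\, i \in \mN^l.
\]
By (A3) we have $N^l \ge 2$, and $\beta > 0$, so the denominator is strictly positive and the right-hand side is independent of $i$; thus all $p_i^l$ coincide, which is the claimed common price $p^l$. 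Repeating the argument for every link proves the lemma.

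The step I expect to require the most care is the passage from the best-response equation to the equality of prices, and in particular the validity of the argument at the \emph{boundary} of the strategy space: I must confirm that the unconstrained minimizer is feasible, so that the first-order condition genuinely characterizes the best response, which is exactly where the smallness of $\eta$ (together with the feasibility bound $0 \le b \le c^l$ from Lemma~\ref{lemPF}) and the nonnegativity of $\bar{p}_{-i}^l$ are needed. It is worth noting that substituting the common value back yields $p^l(N^l-1)(1-\beta)=0$, i.e.\ at every link either the common price vanishes or the capacity constraint is tight ($\beta = 1$); I would defer this sharper dichotomy to the subsequent lemmas and claim here only the equality of prices.
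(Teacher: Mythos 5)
Your proof is correct, but it takes a genuinely different route from the paper's. The paper argues by contradiction with a single deviation: if prices on some link $q$ are not all equal at an NE, then some agent $j$ has $p_j^q > \bar{p}_{-j}^q$, and lowering $p_j^q$ to $\bar{p}_{-j}^q$ (demand unchanged) strictly increases $j$'s utility, because it kills the quadratic term and the third tax term, the latter being nonnegative before the deviation thanks to primal feasibility (Lemma \ref{lemPF}). You instead characterize the \emph{entire} price best response: since prices never affect the allocation and the tax decouples across links, at any NE each $p_i^l$ must minimize the strictly convex quadratic $(p_i^l - \bar{p}_{-i}^l)^2 + \eta\,\bar{p}_{-i}^l\, b\,(p_i^l - \bar{p}_{-i}^l)$ with $b = c^l - \sum_{j\in\mN^l}\alpha_j^l x_j$, giving $p_i^l = \beta\,\bar{p}_{-i}^l$ with $\beta$ common to all agents on the link, and the resulting linear system forces all prices equal. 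Your route buys more than the paper's: an explicit best-response map, and your closing observation that back-substitution yields $p^l(1-\beta)=0$, i.e. $p^l\bigl(c^l - \sum_{j\in\mN^l}\alpha_j^l x_j\bigr)=0$, is precisely Lemma \ref{lemCompSlac}, so your argument would unify the equal-prices and complementary-slackness lemmas into one computation. The price you pay is the boundary issue you flag yourself: you need $\beta>0$ so that the unconstrained minimizer is feasible, and you secure it with an extra smallness condition $\eta < 2/c^l$ that the paper never needs for this lemma (its ``small enough $\eta$'' is invoked only for the Hessian argument in Lemma \ref{lemexis}). This restriction is harmless, since $\eta$ is a design constant and your bound depends only on quantities known to the designer, but it is also avoidable: if $\beta \le 0$, the constrained minimizer of the quadratic over $p_i^l \ge 0$ is $0$ for every agent, so all prices are again equal (all zero); adding that one case makes your proof valid for every $\eta > 0$, matching the paper's generality.
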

\begin{proof}
Suppose there exist a link $ q $ for which prices $ ({p}_i^q)_{i \in \mN^q} $ at equilibrium are not all equal.
Clearly then there is an agent $ j \in \mN^q $ for whom $ {p}_j^q > \bar{p}_{-j}^q $ (this can be seen from eq. \eqref{EQpm}).
We will show that this agent can deviate by only reducing price for link $ q $ and be strictly better off, thereby contradicting the equilibrium condition.

Take the deviation by agent $ j $, where $ {p^{\prime}_j}^q = \bar{p}_{-j}^q < {p}_j^q $ and his demand remains the same.
Then the difference between utilities for agent $ j $ after and before deviation would arise only because of change in taxes
(since allocations haven't changed for any agent) and moreover the difference would only arise from tax terms corresponding to
link $ q $ (refer to \eqref{EQt}). We'll have
\begin{gather}
\Delta \hat{u}_j = - ({p^{\prime}_j}^q -  \bar{p}_{-j}^q)^2 - \eta \, \bar{p}_{-j}^q({p^{\prime}_j}^q - \bar{p}_{-j}^q)(c^q - \sum_{k \in \mN^q} \alpha^q_k {x}_k ) \\
{} +  \eta \, \bar{p}_{-j}^q({p}_j^q - \bar{p}_{-j}^q)(c^q - \sum_{k \in \mN^q} \alpha^q_k {x}_k )   + ({p}_j^q - \bar{p}_{-j}^q)^2 \\
= -0 - 0 + \underbrace{({p}_j^q - \bar{p}_{-j}^q)}_{> 0} \big [ \eta \, \bar{p}_{-j}^q\underbrace{(c^q - \sum_{k \in \mN^q} \alpha^q_k {x}_k )}_{\ge 0 ~by~ Lemma~ \ref{lemPF}}   + \underbrace{({p}_j^q - \bar{p}_{-j}^q)}_{> 0} \big] > 0
\end{gather}
which shows that the above deviation is a profitable one.

Hence at equilibrium, for any link, the price quoted for that link by any user using that link is the same, we denote the common price vector by $ \novec{p} = (p^l)_{l \in \mL} $.
\end{proof}


Now that we established Lemma \ref{lemcmp}, we can talk in terms of the common price vector at equilibrium rather than different price vectors for all agents, in fact we can identify each NE candidate profile $ s = (\novec{y}, \matrix{P}) $ with $ s = (\novec{y}, \novec{p}) $, with $p=(p^l)_{l\in\mL}$. 

We will later see how $ \novec{p} $ will take the place of dual variables $ \lambda $  when we compare equilibrium conditions with KKT conditions,
hence we identify the following condition as dual feasibility.

\begin{lemma}[Dual Feasibility] \label{lemDF}
$ p^l \ge 0 $ $ \quad \forall \ l \in \mL $.
\end{lemma}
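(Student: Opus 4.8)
The plan is to read off nonnegativity directly from the structure of the action space together with the price-agreement result already in hand. The key observation is that the message space itself forces every quoted price to be nonnegative: recall that $\mS_i = \mathbb{R}_{+} \times \mathbb{R}_{+}^{L_i}$, so any admissible action of agent $i$ satisfies $p_i^l \ge 0$ for every $l \in \mL_i$. Since a NE is by definition a profile of admissible actions, this holds in particular at any equilibrium, with no further argument needed.

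Next I would invoke Lemma~\ref{lemcmp}, which guarantees that at any NE the prices quoted on a given link coincide with a single common value $p^l$. Fixing a link $l \in \mL$ and choosing any agent $i \in \mN^l$, we then have $p^l = p_i^l \ge 0$. Since $l$ was arbitrary, this yields $p^l \ge 0$ for all $l \in \mL$, which is exactly the claim. The only point requiring a moment's care is that we are entitled to read $p^l$ off from an actual participant of link $l$; this is ensured by assumption (A3), which gives $N^l \ge 2$ and hence $\mN^l \ne \emptyset$.

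I do not expect any genuine obstacle here: the result is immediate once the common-price identification of Lemma~\ref{lemcmp} is available, precisely because the designer has built dual feasibility into the restriction $\novec{p}_i \in \mathbb{R}_{+}^{L_i}$ of the action set. The value of isolating it as a separate lemma is purely organisational, namely to line up the equilibrium conditions with the KKT dual-feasibility condition $\lambda_l^{\star} \ge 0$ in preparation for the later comparison, where $\novec{p}$ will play the role of the dual variables $\lambda$.
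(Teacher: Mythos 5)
Your proposal is correct and matches the paper's own (one-line) proof: nonnegativity holds by design, since each agent's action space restricts $\novec{p}_i \in \mathbb{R}_{+}^{L_i}$. Your additional care in routing through Lemma~\ref{lemcmp} to identify $p^l$ with some agent's quoted price, and using (A3) for nonemptiness of $\mN^l$, is a harmless elaboration of the same argument.
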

\begin{proof}
This is also by design, since any agent $ i $ is asked to select a price vector in $ \mathbb{R}_{+}^{L_i} $.
\end{proof}

Following is the property that solidifies the notion of prices as dual variables, since here we claim that inactive constraints do not contribute
to payment at equilibrium. This notion is very similar to the centralised problem, where if we know certain constraints to be
inactive at the optimum then the same problem without these constraints would be equivalent to the original.
The 3rd term in the tax function facilitates this by charging extra taxes for inactive constraints when the agent is quoting higher
prices than the average of remaining ones, thereby driving prices down.

\begin{lemma}[Complimentary Slackness] \label{lemCompSlac}
At any NE $ s = (\novec{y},\novec{p})$ of game $ \mathcal{G} $ with corresponding allocation $ x $ satisfies
\begin{equation}
p^l \left ( \sum_{k \in \mN^l} \alpha^l_k {x}_k - c^l \right) = 0 \quad \forall \  l \in \mL
\end{equation}
\end{lemma}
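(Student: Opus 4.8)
The plan is to prove the identity through a sign argument combined with a single profitable deviation. By Lemma~\ref{lemPF} the slack $c^l - \sum_{k\in\mN^l}\alpha^l_k x_k \ge 0$ at every link, and by Lemma~\ref{lemDF} the common price $p^l \ge 0$; hence each product $p^l\big(\sum_{k}\alpha^l_k x_k - c^l\big)$ is automatically nonpositive. It therefore suffices to rule out that some link is simultaneously priced strictly positively and strictly slack, since that is the only way the product can fail to vanish. Accordingly, I would argue by contradiction and suppose there is a link $q$ with $p^q > 0$ and $\sum_{k\in\mN^q}\alpha^q_k x_k < c^q$.

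Next I would exhibit a unilateral deviation driving the price on $q$ down. Invoking Lemma~\ref{lemcmp}, at the candidate NE every agent on $q$ quotes the common price $p^q$, so for any $j\in\mN^q$ (one exists by (A3)) we have $\bar{p}_{-j}^q = p^q$, and the second and third tax terms at $q$ currently vanish. Let agent $j$ deviate by lowering only $p_j^q$ to ${p^{\prime}_j}^q = p^q - \delta$ for a small $\delta\in(0,p^q]$ (so the report stays in $\mathbb{R}_+^{L_j}$), keeping his demand and all his other prices fixed. Two features make this deviation clean: first, the allocation $x = r\,y$ depends only on the demand vector $y$ (see \eqref{EQx0} and \eqref{EQx}), so it is unchanged and hence $v_j(x_j)$ and the slack at $q$ do not move; second, $p_j^q$ enters only agent $j$'s link-$q$ tax term, while $\bar{p}_{-j}^q$ is an average over agents $k\ne j$ and so remains equal to $p^q$. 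Thus only the link-$q$ contribution to $t_j$ changes.

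The change in utility then comes purely from the quadratic and the inactive-constraint tax terms at $q$. Writing the deviation as ${p^{\prime}_j}^q - p^q = -\delta$, it equals
\begin{equation}
\Delta\hat{u}_j = -\delta^2 + \eta\, p^q\,\delta\,\Big(c^q - \sum_{k\in\mN^q}\alpha^q_k x_k\Big).
\end{equation}
Under the contradiction hypothesis the coefficient $\eta\, p^q\big(c^q - \sum_{k}\alpha^q_k x_k\big)$ is strictly positive, so the first-order (linear in $\delta$) gain dominates the second-order (quadratic in $\delta$) penalty for all sufficiently small $\delta$, giving $\Delta\hat{u}_j > 0$. This contradicts the equilibrium condition, forcing every positively priced link to be tight, which is exactly the claimed complementary slackness.

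The argument is short, and the only point demanding care -- the main obstacle -- is the order-of-magnitude comparison in the last display: the penalty for quoting off the average grows like $\delta^2$ while the relief from the inactive-constraint term grows like $\delta$, so one must verify both that the deviation stays feasible ($\delta\le p^q$) and that it is small enough for the first-order term to win. The strict positivity of $p^q$ and of the slack is precisely what guarantees such a $\delta$ exists. This mirrors the deviation used in Lemma~\ref{lemcmp}, here exploiting the \emph{third} tax term rather than the second.
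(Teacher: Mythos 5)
Your proposal is correct and follows essentially the same route as the paper's own proof: a contradiction via a unilateral downward price deviation on the slack link, with the utility change $\Delta\hat{u}_j = \delta\bigl(-\delta + \eta\, p^q(c^q - \sum_{k\in\mN^q}\alpha^q_k x_k)\bigr)$ made positive by choosing $\delta$ smaller than both $p^q$ and the (strictly positive) first-order coefficient. The only cosmetic differences are your explicit preliminary reduction via Lemmas~\ref{lemPF} and~\ref{lemDF} to the ``positively priced and strictly slack'' case, and your explicit invocation of Lemma~\ref{lemcmp} to identify $\bar{p}_{-j}^q$ with $p^q$, both of which the paper leaves implicit.
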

\begin{proof}
Suppose there is a link $ q $ for which at NE $ \sum_{k \in \mN^q} \alpha^q_k {x}_k < c^q $ and $ p^q > 0 $.
Again, we will show that deviation makes any agent $ j \in \mN^q $ better off.
Take the deviation $ {p_j^{\prime}}^q = p^q - \epsilon > 0 $ and no deviation in the demand.
We can write the difference after and before deviation as
\begin{gather}
\Delta \hat{u}_j = - ({p^{\prime}_j}^q - p^q)^2 - \eta \,  p^q({p^{\prime}_j}^q - p^q)(c^q - \sum_{k \in \mN^q} \alpha^q_k {x}_k ) \\
{} + \eta \, p^q(p^q - p^q)(c^q - \sum_{k \in \mN^q} \alpha^q_k {x}_k )   + (p^q - p^q)^2 \\
\Rightarrow \quad \Delta \hat{u}_j = - (-\epsilon)^2 -  \eta \, p^q(-\epsilon)(c^q - \sum_{k \in \mN^q} \alpha^q_k {x}_k ) + 0 + 0 = \epsilon \big( -\epsilon + \eta \, p^q(c^q - \sum_{k \in \mN^q} \alpha^q_k {x}_k ) \big)
\end{gather}
So if we take $ \epsilon $ such that
\begin{equation}
\min \Big \{ \underbrace{\eta \,  p^q}_{> 0} \underbrace{(c^q - \sum_{k \in \mN^q} \alpha^q_k {x}_k)}_{> 0},~~ \underbrace{p^q}_{> 0} \Big\} > \epsilon > 0
\end{equation}
then $ \Delta \hat{u}_j > 0 $. Taking such an $ \epsilon $ is possible because LHS above is positive.
\end{proof}

\begin{lemma}[Stationarity] \label{lemStat}
At any NE $ {s} = ({\novec{y}}, \novec{p}) $ of game $ \mathcal{G} $, and corresponding allocation, $ {x} $, we have
\begin{gather}
v_i^{\prime}({x}_i) = \sum_{l \in \mL_i} p^l \alpha^l_i \quad \forall~~ i \in \mN \quad \text{if} \quad {x}_i > 0 \\
v_i^{\prime}({x}_i) \le \sum_{l \in \mL_i} p^l \alpha^l_i \quad \forall~~ i \in \mN \quad \text{if} \quad {x}_i = 0
\end{gather}
\end{lemma}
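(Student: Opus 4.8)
The plan is to fix an arbitrary NE $s=(\novec{y},\novec{p})$, which by Lemma~\ref{lemcmp} has a common price $p^l$ on each link, and to examine the unilateral deviations of agent $i$ in which only the demand $y_i$ is varied while every quoted price is held at its equilibrium value. The first step is to record the effect of such a deviation on agent $i$'s tax. Since $p_i^l=\bar{p}_{-i}^l=p^l$ at the NE by Lemma~\ref{lemcmp}, and since a pure demand deviation alters neither $p_i^l$ (agent $i$'s own price) nor $\bar{p}_{-i}^l$ (which by \eqref{EQpm} depends only on the \emph{other} agents' prices), the second and third summands of $t_i^l$ in \eqref{EQt} stay identically zero throughout the deviation. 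The tax therefore collapses to the linear form $t_i=\sum_{l\in\mL_i} x_i\,\alpha_i^l p^l=\pi_i\,x_i$, where $\pi_i\coloneqq\sum_{l\in\mL_i}\alpha_i^l p^l\ge 0$ and $x_i$ is the deviation-dependent allocation. Thus along demand deviations agent $i$'s payoff equals $v_i(x_i)-\pi_i x_i$, a function of his realized allocation alone.

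The second, and main, step is to identify exactly which allocations agent $i$ can realize by varying his demand. Writing $y_i=t\ge 0$ and holding all other demands fixed, \eqref{EQx0}--\eqref{EQx} give $x_i(t)=t\,r(t)=\min_{l\in\mL}\bigl(t\,r^l(t)\bigr)$. For $l\notin\mL_i$ the factor $r^l$ is independent of $t$, so $t\,r^l(t)$ is linear increasing; for $l\in\mL_i$ and $t>0$ one checks, in the two relevant cases of \eqref{EQA1}, that $t\,r^l(t)$ equals $\tfrac{c^l t}{\alpha_i^l t+A^l}$ (with $A^l=\sum_{j\in\mN^l\setminus\{i\}}\alpha_j^l y_j$) or $\tfrac{c^l t}{\alpha_i^l(t+1)}$, each continuous, increasing in $t$, and bounded above by $c^l/\alpha_i^l$. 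Being a minimum of continuous increasing functions, $x_i(\cdot)$ is continuous and nondecreasing, with $x_i(0)=0$ and $\lim_{t\to\infty}x_i(t)=\min_{l\in\mL_i}c^l/\alpha_i^l\eqqcolon X_i^{\max}$ (the links outside $\mL_i$ contribute $+\infty$ in the limit). By the intermediate value theorem its image is precisely the half-open interval $[0,X_i^{\max})$. I expect this to be the crux: one must verify that the min-projection rule, as agent $i$ sweeps his demand over $[0,\infty)$, sweeps his own allocation continuously across the entire interval $[0,X_i^{\max})$, so that the equilibrium allocation genuinely lies in the interior of the achievable range with room to move both up and down.

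The final step assembles these facts. The NE condition forbids any profitable demand deviation, so the equilibrium allocation $x_i$ maximizes the concave function $g(z)\coloneqq v_i(z)-\pi_i z$ over $[0,X_i^{\max})$; note $x_i<X_i^{\max}$ because the supremum $X_i^{\max}$ is not attained. If $x_i>0$ then $x_i$ is an interior maximizer of the differentiable concave $g$, whence $g'(x_i)=0$, i.e. $v_i'(x_i)=\pi_i=\sum_{l\in\mL_i}p^l\alpha_i^l$. If $x_i=0$ then $0$ is a left-endpoint maximizer, so the right derivative satisfies $g'(0)\le 0$, i.e. $v_i'(0)\le\sum_{l\in\mL_i}p^l\alpha_i^l$. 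These are exactly the two asserted stationarity relations, which completes the argument.
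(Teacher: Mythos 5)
Your proof is correct, but it takes a genuinely different route from the paper's. The paper argues locally: it writes the one-sided partial derivatives of $\hat{u}_i$ with respect to $y_i^{\prime}$ at the equilibrium point (being careful about the kinks created by the $\min$ in \eqref{EQx0}), computes $\beta = \partial x_i/\partial y_i$ explicitly in three cases (agent off the binding link, on a binding link with $\vert S^l(y)\vert \ge 2$, and the singleton case with the $f^l$ correction), shows $\beta>0$ in each, and then runs a sign argument on $\bigl(v_i^{\prime}(x_i)-\sum_{l\in\mL_i}p^l\alpha_i^l\bigr)\beta$ to rule out profitable upward/downward demand deviations. You instead argue globally: after the same reduction of the tax to the linear form $\pi_i x_i$ (both proofs lean on Lemma~\ref{lemcmp} for this), you change variables from the demand $y_i^{\prime}$ to the realized allocation, prove that the map $t\mapsto x_i(t)=\min_{l}\bigl(t\,r^l(t)\bigr)$ is continuous and increasing with image exactly $[0,X_i^{\max})$, and then invoke concavity of $g(z)=v_i(z)-\pi_i z$ to read off the interior first-order condition (or the endpoint inequality at $z=0$) from global optimality over that interval. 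What your approach buys is a complete bypass of the non-differentiability issue the paper must confront --- you never need one-sided derivatives or the chain rule through the min --- and your case analysis of \eqref{EQA1} (yielding $\tfrac{c^l t}{\alpha_i^l t+A^l}$ and $\tfrac{c^l t}{\alpha_i^l(t+1)}$) correctly notes that the relevant sub-case of each link is fixed throughout the deviation, so each branch is a single smooth increasing function; your global statement is also in the same spirit as the paper's own individual-rationality argument in Lemma~\ref{lem_ir}. What the paper's route buys is reusability: the explicit computation of $\beta>0$ is invoked again verbatim in the existence proof (Lemma~\ref{lemexis}), so the local derivative machinery does double duty there, whereas your image-of-the-allocation-map characterization serves only this lemma. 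One cosmetic imprecision on your side: for links $l\notin\mL_i$ with $\vert S^l(y_{-i})\vert=0$ you have $r^l=+\infty$, so ``linear increasing'' is not literally accurate, but such links never attain the minimum and your limit $X_i^{\max}=\min_{l\in\mL_i}c^l/\alpha_i^l$ is unaffected.
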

\begin{proof}
At any NE $s$, agent $ i $'s utility
$ \hat{u}_i(s^{\prime}_i,s_{-i}) = v_i(h_{x,i}(s^{\prime}_i,s_{-i})) - h_{t,i}(s^{\prime}_i,s_{-i}) $
as a function of his message $ s_i^{\prime} = (y_i^{\prime}, \novec{p}_i^{\prime}) $, with $ s_{-i} $ fixed, should have a global maximum at $ {s}_i = ({y}_i, \novec{p}) $.
This would mean that if this function was differentiable w.r.t. $y_i^{\prime}$ at $s_i$, the partial derivatives w.r.t. $ y_i^{\prime} $ at $ s_i $ should be $ 0 $.
However, since our allocation dilates/shrinks demand vector $ \novec{y}^{\prime} $ on to the feasible region, it could be the case
that increasing and decreasing $ y_i^{\prime} $ gives allocations lying on different hyperplanes, meaning that the transformation
from $ \novec{y}^{\prime} $ to $ {x}^{\prime} $ is different on both sides of $ {y}_i $ and therefore  $ \hat{u}_i $ may not be differentiable
w.r.t $ y_i^{\prime} $ at $ {y}_i $.
The important thing here however is to notice that right and left derivatives exist, it's just that they may not be equal.  Hence we can take derivatives on both sides of $ {y}_i $ as (noting that derivative of the other two terms in utility will be zero due Lemma \ref{lemcmp})
\begin{align}
\label{EQder} \frac{\partial \hat{u}_i}{\partial y_i^{\prime}} \Big \vert_{y_i^{\prime} \downarrow {y}_i} = \left( v_i^{\prime}({x}_i) - \sum_{l \in \mL_i} p^l \alpha^l_i  \right) \frac{\partial x_i^{\prime}}{\partial y_i^{\prime}} \Big \vert_{y_i^{\prime} \downarrow {y}_i}, \quad  \frac{\partial \hat{u}_i}{\partial y_i^{\prime}} \Big \vert_{y_i^{\prime} \uparrow {y}_i} = \left( v_i^{\prime}({x}_i) - \sum_{l \in \mL_i} p^l \alpha^l_i  \right) \frac{\partial x_i^{\prime}}{\partial y_i^{\prime}} \Big \vert_{y_i^{\prime} \uparrow {y}_i}.   
\end{align}

We will first show that $ \partial x_i / \partial y_i $ term above (for either equation) is always positive. If $ y = 0 $ then clearly this is true, because if agent $ i $ demands $ y_i = \epsilon > 0 $ while $ y_{-i} = 0 $ then $ x_i > 0 $ (in fact the allocation is indeed differentiable at $ y = 0 $).
If $ y \ne 0 $ then from \eqref{EQx}, we can write
\begin{equation}
\beta \coloneqq \frac{\partial x_i}{\partial y_i} = \frac{\partial( r y_i)}{\partial y_i} = r + y_i \frac{\partial r}{\partial y_i} = r^q + y_i \frac{\partial r^q}{\partial y_i}
\end{equation}
where $ r = r^q $. 

From here on there are 3 cases: (A) $ i \notin \mN^q $; (B1) $ i \in \mN^q $  \&  $ \vert S^q(y) \vert \ge 2  $ and (B2) $ i \in \mN^q $  \&  $ \vert S^q(y) \vert = 1  $. (Note that $ \vert S^q(y) \vert = 0 $ isn't possible if $ r = r^q $ and $ y \ne 0 $)

(A) Here clearly $ \partial r^q / \partial y_i = 0 $, and this makes $ \beta = r^q > 0 $.

(B1) Here we can calculate $ \beta $ as
\begin{equation}
\beta = \frac{(r^q)^2}{c^q} \sum_{j \in \mN^q \backslash \{i\}} \alpha_j^q y_j
\end{equation}
which is positive because $ \vert S^l(y) \vert \ge 2 $, so there is at least one positive term in the summation.

(B2) In this case we could have $ S^q(y) = \{j\} \ne \{i\} $ or $ S^q(y) = \{i\} $. In the first case, argument is same as (A). So take $ S^q(y) = \{i\} $ and $ r^q = c^q/(\alpha_{i}^q y_i) - f^q(y_i) $.
\begin{equation}
\beta = -f^q(y_i) - y_i \frac{d {} f^q(y_i)}{d y_i} = - \frac{d ~ (y_i f^q(y_i))}{d y_i} = \frac{c^q}{\alpha_{i}^q (y_i + 1)^2}
\end{equation}
So we have $ \beta > 0 $ in all cases.

Now there are two further cases, the first term on RHS in both equations in \eqref{EQder} is positive or negative. If it's positive, then we can see from the first equation in \eqref{EQder} that by increasing $ y_i^{\prime} $ from $ {y}_i $ (and therefore $ x_i^{\prime} $ from $ {x}_i $) agent $ i $ can increase his pay-off, which contradicts equilibrium. Now similarly consider the first term in \eqref{EQder} to be negative, then from the second equation in \eqref{EQder}, agent $ i $ can reduce $ y_i^{\prime} $  from $ y_i $ to get a better pay-off. But the downward deviation in $ y_i^{\prime} $ is only possible if $ {y}_i > 0 $ ($ \Leftrightarrow~~ {x}_i > 0 $). So we conclude that
\begin{gather}
v_i^{\prime}({x}_i) = \sum_{l \in \mL_i} p^l \alpha^l_i \quad \forall~~ i \in \mN \quad \text{if} \quad {x}_i > 0 \\
v_i^{\prime}({x}_i) \le \sum_{l \in \mL_i} p^l \alpha^l_i \quad \forall~~ i \in \mN \quad \text{if} \quad {x}_i = 0.
\end{gather}
\end{proof}

Collecting the results of the above lemmas, we see that every NE satisfies the KKT conditions of the \eqref{CP}.
This means we now have necessary conditions on the NE up to the point of having unique allocation. In the next Lemma we will verify the existence of the equilibria that we have claimed.

\begin{lemma}[Existence] \label{lemexis}
For the game $ \mathcal{G} $, there exists equilibria $ s = (y,P) $, where price vectors are same
for each agent i.e. $ s = (y,p) $ and corresponding allocation-price pair $ (x,p) $ satisfy KKT conditions
with $\lambda =p$.
\end{lemma}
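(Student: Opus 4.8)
The plan is to exhibit an explicit equilibrium by lifting the KKT point of \eqref{CP} into the message space and then verifying that it is a best response for every agent. Let $(x^\star,\lambda^\star)$ be the primal--dual optimizer characterized by the KKT conditions of Section~\ref{secCP}, and propose the profile in which each agent demands $y_i=x_i^\star$ and quotes, on every link $l\in\mL_i$, the common price $p_i^l=\lambda_l^\star$. The first step is to check that this profile induces the allocation $x^\star$ and common price $p=\lambda^\star$. By (A4) we have $|S^l(y)|=|S^l(x^\star)|\ge 2$ at every link, so only the first branch of \eqref{EQA1} ever applies and $r^l=c^l/\sum_{j\in\mN^l}\alpha_j^l x_j^\star$, which equals $1$ on active links and exceeds $1$ on inactive ones. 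Moreover at least one link is active: by (A4) some agent has $x_i^\star>0$, and stationarity together with $v_i'>0$ and $\alpha_i^l>0$ forces some $\lambda_l^\star>0$, whence complementary slackness makes that link active. Therefore $r=\min_l r^l=1$ and $x_i=ry_i=x_i^\star$; since all agents quote $\lambda_l^\star$ we get $\bar p_{-i}^l=\lambda_l^\star$, and the asserted KKT correspondence with $\lambda=p$ holds by construction.

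It remains to show that no unilateral deviation is profitable, and I would first dispose of the two ``pure'' deviations. For a price-only deviation the demand is held at $y_i=x_i^\star$, so allocations and the slacks $c^l-\sum_{j\in\mN^l}\alpha_j^l x_j$ are unchanged; the only affected terms of \eqref{EQt} on link $l$ are $(p_i'^l-\lambda_l^\star)^2+\eta\,\lambda_l^\star(p_i'^l-\lambda_l^\star)(c^l-\sum_{j\in\mN^l}\alpha_j^l x_j)$, but on active links the slack is zero and on inactive links $\lambda_l^\star=0$, so in either case this reduces to the convex penalty $(p_i'^l-\lambda_l^\star)^2$, minimized exactly at the candidate price. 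For a demand-only deviation the prices are held at the candidate, so $p_i^l=\bar p_{-i}^l$ annihilates both the quadratic and the $\eta$-term, leaving the payoff $v_i(x_i')-x_i'\sum_{l\in\mL_i}\alpha_i^l\lambda_l^\star$; by concavity of $v_i$ and the stationarity relation $v_i'(x_i^\star)=\sum_{l\in\mL_i}\alpha_i^l\lambda_l^\star$ (with $v_i'(0)\le\sum_{l\in\mL_i}\alpha_i^l\lambda_l^\star$ and $x_i'\ge 0$ handling the case $x_i^\star=0$) this is maximized at $x_i'=x_i^\star$. Because the map $y_i'\mapsto x_i'$ is monotone --- its one-sided slopes $\beta$ were shown strictly positive in Lemma~\ref{lemStat} --- optimizing over $x_i'$ is equivalent to optimizing over $y_i'$, so the candidate demand is optimal.

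The main obstacle is the genuinely joint deviation, in which the agent perturbs its demand so as to open slack $s_l:=c^l-\sum_{j\in\mN^l}\alpha_j^l x_j'>0$ on a link that is active at the candidate (hence $\lambda_l^\star>0$) while simultaneously lowering its quoted price there, turning the cross term $\eta\,\lambda_l^\star(p_i'^l-\lambda_l^\star)s_l$ into a negative tax rebate. I would control an arbitrary deviation by the decomposition $\Delta\hat u_i=A+B+C$, where $A=v_i(x_i')-v_i(x_i^\star)-\sum_{l\in\mL_i}\alpha_i^l\lambda_l^\star(x_i'-x_i^\star)\le 0$ is the concavity defect, $B=-\sum_{l}(p_i'^l-\lambda_l^\star)^2\le 0$ is the quadratic penalty, and $C=-\eta\sum_{l}\lambda_l^\star(p_i'^l-\lambda_l^\star)s_l$ is the cross term (with $s_l\ge 0$ by Lemma~\ref{lemPF}). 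Writing $w_l=p_i'^l-\lambda_l^\star$, the only regime where $w_l^2+\eta\,\lambda_l^\star s_l w_l<0$ is the narrow band $w_l\in(-\eta\,\lambda_l^\star s_l,0)$, over which the best achievable rebate-minus-penalty is only $(\eta\,\lambda_l^\star s_l)^2/4=O(\eta^2 s_l^2)$. The crux is that opening slack $s_l$ on an active link forces the agent's own allocation away from $x_i^\star$ by an amount of the same order as $s_l$, so that strict concavity of $v_i$ makes the defect $|A|$ of order $s_l^2$ with a constant bounded below --- by (A1) and the boundedness of the achievable allocations, $|v_i''|$ is uniformly positive on the relevant compact range. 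Hence for $\eta$ below the worst-case ratio of these two quadratics the concavity and quadratic-penalty costs dominate the rebate and $\Delta\hat u_i\le 0$, while large deviations are suppressed even more strongly by the convex penalty $B$ and the concavity of $v_i$. Making this threshold explicit is the delicate technical step and is exactly the ``$\eta$ small enough'' condition referred to in the statement; throughout, the non-differentiability of the allocation in $y_i'$ is handled as in Lemma~\ref{lemStat} by working with the monotone one-sided maps rather than a global derivative.
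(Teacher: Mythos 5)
Your proposal is correct in substance but takes a genuinely different route from the paper for the main (deviation) part of the proof. The first step is shared: the paper also lifts the KKT point into the message space (any positive scalar multiple of $x^{\star}$, in particular $y=x^{\star}$, with all prices equal to $\lambda^{\star}$), though your verification that $r=1$ --- via (A4), complementary slackness, and the existence of at least one active link --- is more explicit than the paper's. For the second step the paper never argues deviation-by-deviation: it characterises all local extrema of $\hat u_i(\cdot,s_{-i})$ (showing the gradient must vanish there, which forces the KKT relations), computes the Hessian of $\hat u_i$ with respect to $s_i$, shows it is negative definite at every extremum once $\eta$ is small, and then concludes from ``every critical point is a local maximum'' that the candidate is the unique global maximum. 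You replace this second-order machinery by a direct global estimate: the exact decomposition $\Delta\hat u_i=A+B+C$, the per-link bound $-w_l^2-\eta\,\lambda^{\star}_l s_l w_l\le(\eta\,\lambda^{\star}_l s_l)^2/4$, and the geometric claim that slack opened on a link of $\mL_i$ active at the candidate is $O(|x_i'-x_i^{\star}|)$, so that the $O(\eta^2 d^2)$ rebate is dominated by the $\Omega(d^2)$ concavity defect for small $\eta$. What your route buys is significant: the paper's inference that a piecewise-differentiable function all of whose local extrema are maxima has a unique maximum is the weakest link of its argument, and you avoid it entirely with a bound valid for arbitrary (not just local) deviations. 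What it costs is the slack bound, which you assert rather than prove; it does hold --- on a link $l\in\mL_i$ active at the candidate one has $s_l'=(1-r')T_l+\alpha_i^l(x_i^{\star}-x_i')$ with $T_l=\sum_{j\ne i}\alpha^l_j x_j^{\star}$, and whenever $r'<1$ the minimizing link $q$ satisfies $(1-r')\le(\alpha_i^q/T_q)(x_i'-x_i^{\star})$ with $T_q>0$ guaranteed by (A4) --- so this is a fillable sketch, not an error, but it is genuinely the crux and should be written out (including the corner case $y_i'=0$, where the second branch of \eqref{EQA1} can make $r'<1$). One caveat you share with the paper: strict concavity plus continuity of $v_i''$ under (A1) does \emph{not} make $|v_i''|$ uniformly positive on a compact range, since $v_i''$ may vanish at isolated points; your lower bound on $|A|$, exactly like the paper's claim that the $v_i''(x_i)$ term makes the Hessian diagonal strictly negative, implicitly strengthens (A1) to $v_i''<0$.
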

\begin{proof}
The proof is completed in two parts. Firstly we will check that for every $ x $ that can be a possible solution to \eqref{CP} (while satisfying the assumptions, specifically (A4))
there is indeed at least one $ y \in \mathbb{R}_{+}^N $ such that the allocation corresponding to $ y $ is $ x $.
(it is straightforward that this is true for prices and Lagrange multipliers $ \lambda $).
Secondly we will check that for the claimed NE $ s=(y,P) $, there are no unilateral deviation that are profitable.

In lieu of (A4), the optimal $ x^{\star} $ is such that $ \vert S^l(x^{\star}) \vert \ge 2 $ for all links; also it is clear that $ x^{\star} $ is on the boundary of the feasible region defined by \eqref{C1} and \eqref{C2}. For this, any vector $ y $ which is a scalar multiple of $ x^{\star} $ (and in particular $ y = x^{\star} $)
will give allocation $ x^{\star} $. This completes the first part of the proof.

Now we will check for profitable deviations. For this we want to show that at any action profile where corresponding allocation is $ x^\star $ and prices are equal and equal to $ \lambda^\star $, is a NE. Note that by construction, $ \hat{u}_i(s) $ is continuous w.r.t. $ s_i $.
Our approach would be to first characterise all local extrema.
As it turns out all local extrema can be characterised by $ \nabla \hat{u}_i = 0 $ $ \forall \ i $ (this isn't straightforward here since $ \hat{u}_i $ is only piecewise differentiable).
We then show that the set of points of local extrema can easily be seen to be precisely the set of points that satisfy the KKT conditions, i.e., where allocation is $ x^\star $ and prices are $ \lambda^\star $.
At the third step we show the Hessian for $ \hat{u}_i(s) $ w.r.t. $ s_i $ to be negative definite at all local extrema (for any agent $ i $). With this we know that all extremum points for $ \hat{u}_i $ w.r.t. $ s_i $ are local maxima, which can only happen if there is one maximum.
With the above uniqueness of maxima what we end up proving is that $( x^\star , \lambda^\star )$
is global optimum for every agent $i$ if others also play the same and thus a NE.

First we will show that without the gradient being zero, there cannot be a local extremum. Gradient of the utility function is
\begin{equation}
\nabla \hat{u}_i = \left(\frac{\partial \hat{u}_i}{\partial y_i}, \left(\frac{\partial \hat{u}_i}{\partial p_i^l}\right)_{l \in \mL_i} \right)
\end{equation}
Components in the gradient are
\begin{gather} \label{EQyder}
\frac{\partial \hat{u}_i}{\partial y_i} = \left(v_i^{\prime}(x_i) - \sum_{l \in \mL_i} \alpha^l_i \bar{p}_{-i}^l\right) \left(\frac{\partial x_i}{\partial y_i}\right) - \eta \sum_{l \in \mL_i} \bar{p}_{-i}^l (p_i^l - \bar{p}_{-i}^l) \left(-\sum_{j \in \mN^l} \alpha^l_j \frac{\partial x_j}{\partial y_i} \right) \\
\label{EQpder}
\frac{\partial \hat{u}_i}{\partial p_i^l} = -2(p_i^l - \bar{p}_{-i}^l) - \eta \, \bar{p}_{-i}^l (c^l - \sum_{j \in \mN^l} \alpha^l_j x_j) \quad \forall~~l \in \mL_i
\end{gather}
Since $ \hat{u}_i $ is differentiable w.r.t. $ p_i^l $ (at all points), that component of the gradient indeed has to be $ 0 $ at a local extremum. Which implies equal prices and complimentary slackness properties (using arguments from respective proofs) and we can write
\begin{equation}
\frac{\partial \hat{u}_i}{\partial y_i} = \left(v_i^{\prime}(x_i) - \sum_{l \in \mL_i} \alpha^l_i {p}^l\right) \left(\frac{\partial x_i}{\partial y_i}\right)
\end{equation}
Note that in the proof of Lemma~\ref{lemStat}, we have shown that $ \beta \coloneqq \frac{\partial x_i}{\partial y_i} > 0 $ always. So at the points of non-differentiability, $ \beta $ will have a jump discontinuity, however it will be positive on either side. It is then clear that without $ v_i^\prime(x_i) = \sum_{l \in \mL_i} \alpha_i^l p^l $, there cannot be a local extremum.


Second order partial derivatives are
\begin{align}
u_{pp} \coloneqq \frac{\partial^2 \hat{u}_i}{\partial p_i^{l}\partial p_i^{l}} = -2  \qquad
u_{lk} \coloneqq \frac{\partial^2 \hat{u}_i}{\partial p_i^{l} \partial p_i^k } = 0  \qquad
u_{py} \coloneqq \frac{\partial^2 \hat{u}_i}{\partial p_i^{l} \partial y_i} = \eta \, \bar{p}_{-i}^l \left(\sum_{j \in \mN^l } \alpha^l_j \frac{\partial x_j}{\partial y_i} \right)
\end{align} \begin{align}
u_{yy} \coloneqq \frac{\partial^2 \hat{u}_i}{\partial y_i \partial y_i} = \left(v_i^{\prime}(x_i) - \sum_{l \in \mL_i} \alpha^l_i \bar{p}_{-i}^l\right) \left(\frac{\partial^2 x_i}{\partial y_i^2}\right) + v_i^{\prime \prime}(x_i) \left(\frac{\partial x_i}{\partial y_i}\right)^2 \\
{} - \eta \sum_{l \in \mL_i} \bar{p}_{-i}^l (p_i^l - \bar{p}_{-i}^l) \left(-\sum_{j \in \mN^l} \alpha^l_j \frac{\partial^2 x_j}{\partial y_i^2} \right).
\end{align}
These derivatives will give us a Hessian $ H $ of size $ (L_i+1) \times ( L_i + 1 ) $, where 1st row and column represent $ y_i $ and subsequent rows and columns represent $ p_i^l $'s for different $ l $'s. We want $ H $ (evaluated at any local extremum) to be negative definite.
Now, 1st and 3rd terms in $ u_{yy} $ are zero at local extrema (as argued above), and the 2nd term is strictly negative due to strict concavity of $ v_i $.
This along with $ u_{pp} = -2 $ tells us that all diagonal entries in $ H $ are negative. Also notice that because of $ u_{lk} = 0 $, all off-diagonal entries other than the ones in first row and column are zero.
Finally, note that due to assumption (A2), all prices are finite at local extremum and so $ u_{py} $ will be finite.
We will show that roots of the characteristic polynomial of $ H $ (i.e. its eigenvalues) all become negative if $ \eta $ is chosen sufficiently small.
Here again, we can use (A4) 
to justify working with $ r $ that has the form defined in the first sub-case of \eqref{EQA1}.

For this, we take a generic matrix $ A = \{a_{ij}\}$, which is similar in structure to $ H $ and has the same dependence on $ \vert y \vert $ as $ H $. So entries in $ A $ are
\begin{gather}
a_{11} = -\frac{a}{\vert y \vert^2} \qquad a_{ij} = a_{ji} = 0 \qquad \forall \quad i,j > 1,~~ i \ne j  \\
a_{ii} = -2  \qquad a_{1i} = a_{i1} = \eta \, \frac{b_{i-1}}{\vert y \vert} \qquad \forall \quad 2 \le i \le L_i+1
\end{gather}
where $ a > 0 $ (and we don't care about the sign of $ b_i $'s). The parameters $ a,b_i $ may not be completely independent of $ y $ but since the absolute value of $ y $ has been taken out of the scaling, their values are bounded. Magnitude of $ b_i $'s are bounded from above and $ a $ is bounded away from zero. Now we can explicitly calculate $ \vert A - \lambda I \vert $ and write the characteristic equation as
\begin{equation}
Q(\lambda) = \left(-\frac{a}{\vert y \vert^2} - \lambda \right) (-2-\lambda)^{L_i} + \eta^2 \, \frac{\sum_{i = 1}^{L_i} (-1)^{i} b_i^2}{\vert y \vert^2} (-2-\lambda)^{L_i-1}  = 0
\end{equation}
So $ -2 $ is a repeated eigenvalue, $ L_i - 1 $ times. The equation for the remaining two roots can be written as
\begin{equation}
\left(-\frac{a}{\vert y \vert^2} - \lambda \right) (-2-\lambda) + \eta^2 \frac{C}{\vert y \vert^2}  = 0
\end{equation}
Necessary and sufficient conditions for both roots of this quadratic to be negative are
\begin{equation}
\left(2 + \frac{a}{\vert y \vert^2}\right) > 0 \qquad \frac{2a}{\vert y \vert^2} + \eta^2 \frac{C}{\vert y \vert^2} > 0,
\end{equation}
first of which is always true, since $ a > 0 $. The second one can be ensured by making $ \eta $ small enough, since $ a $ is bounded away from zero and magnitude of $ C $ is bounded from above. 

Hence we have shown the Hessian $ H $ to be negative definite for $ \eta $ chosen to be small enough.
%
\end{proof}

Several comments are in order regarding the selection of the proportional allocation mechanism and in particular~\eqref{EQA1}.
If we use ``pure" proportional allocation i.e. same expression for $ r^l $ for $ \vert S^l(y) \vert \ge 2 $ and $ \le 1 $, then irrespective of optimal solution of \eqref{CP}, for game $ \mathcal{G} $ the ``stationarity'' property will not be satisfied for equilibria with $ \vert S^l(y) \vert \le 1$. Thus the mechanism will result in additional extraneous equilibria.
For this reason we tweak the expression for $ r^l $ when $ \vert S^l(y) \vert \le 1 $, so that we can eliminate these extraneous equilibria - irrespective of the solution of \eqref{CP}.
With this tweak in the expression for $ r^l $, all KKT conditions become necessary for all
equilibria regardless of the value of  $\vert S^l(y) \vert $.
This however creates a problem in the proof of existence of equilibria.
In particular, if $ x^{\star} $ was such that it had links where $ \vert S^l(x^{\star}) \vert = 1 $ then in our allocation this would require $ y $ at NE such that $ \vert S^l(y) \vert = 1 $.
In this case the $ r^l $ used would be lower than what the proportional allocation requires (see second sub-case in \eqref{EQA1}) and we actually would have the problem of possibly not having any $ y $ that creates $ x^{\star} $ as allocation.
Hence we have used (A4) to eliminate this case.

\begin{lemma}[Individual Rationality] \label{lem_ir}
At any NE $ s = (\novec{y}, \novec{p}) $ of $ \mathcal{G} $, with corresponding allocation $ x $, we have
\begin{align} \label{EQIRB}
u_i(x,t) &\ge u_i(0,0) \quad \forall~~ i \in \mN \\
\text{and} \qquad \sum_{i \in \mN} t_i &\ge 0  \qquad (WBB)
\end{align}
\end{lemma}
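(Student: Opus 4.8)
The plan is to exploit the fact that at any NE the price-dependent penalty terms of the tax vanish, collapsing the tax to a single linear payment, and then to read off agent IR from concavity of $v_i$ and WBB from nonnegativity of the common prices.

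First I would simplify the equilibrium tax. By Lemma~\ref{lemcmp}, at any NE all agents on a link quote the common price, so $\bar{p}_{-i}^l = p^l$ and $p_i^l - \bar{p}_{-i}^l = 0$. Substituting into \eqref{EQt}, the quadratic term $(p_i^l - \bar{p}_{-i}^l)^2$ and the entire third term vanish, leaving
\[
t_i = \sum_{l \in \mL_i} x_i \alpha^l_i p^l = x_i \sum_{l \in \mL_i} \alpha^l_i p^l .
\]
This is the single identity that drives both claims.

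Next, for agent IR I would split on whether $x_i = 0$. If $x_i = 0$, then $t_i = 0$ and $u_i(x,t) = v_i(0) = u_i(0,0)$, so the inequality holds with equality; note that here stationarity (Lemma~\ref{lemStat}) holds only as an inequality, but it is not needed since $t_i$ is directly zero. If $x_i > 0$, Lemma~\ref{lemStat} gives $v_i'(x_i) = \sum_{l \in \mL_i} \alpha^l_i p^l$, hence $t_i = x_i v_i'(x_i)$, and therefore
\[
u_i(x,t) - u_i(0,0) = v_i(x_i) - v_i(0) - x_i v_i'(x_i) .
\]
Concavity of $v_i$ from (A1) yields the supporting-hyperplane bound $v_i(0) \le v_i(x_i) - x_i v_i'(x_i)$, which is precisely $u_i(x,t) \ge u_i(0,0)$.

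Finally, for WBB I would sum the simplified taxes and interchange the order of summation over agents and links, using the equivalence $l \in \mL_i \iff i \in \mN^l$, to obtain
\[
\sum_{i \in \mN} t_i = \sum_{i \in \mN} x_i \sum_{l \in \mL_i} \alpha^l_i p^l = \sum_{l \in \mL} p^l \sum_{i \in \mN^l} \alpha^l_i x_i .
\]
Dual feasibility (Lemma~\ref{lemDF}) gives $p^l \ge 0$, while $\alpha^l_i > 0$ and $x_i \ge 0$ make each inner sum nonnegative, so the total is $\ge 0$; this establishes WBB and simultaneously the seller's individual rationality, since the seller's revenue $\sum_{i\in\mN} t_i$ exceeds his zero outside option. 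I do not expect a genuine obstacle here: once the equilibrium tax is simplified, every step is an application of an already-proved lemma or of concavity, and the only point demanding care is the $x_i = 0$ case, where stationarity must not be substituted into $t_i$ but is circumvented because the tax is identically zero there.
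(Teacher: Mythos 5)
Your proposal is correct and follows essentially the same route as the paper: both simplify the equilibrium tax to $x_i \sum_{l \in \mL_i} \alpha^l_i p^l$ via Lemma~\ref{lemcmp}, obtain WBB from nonnegativity of prices, allocations and the $\alpha^l_i$, and prove agent IR by splitting on $x_i = 0$ versus $x_i > 0$ and combining stationarity (Lemma~\ref{lemStat}) with concavity of $v_i$. The only cosmetic difference is that you invoke the tangent-line inequality at $x_i$ directly, whereas the paper argues that $f(z) = v_i(z) - z\sum_{l \in \mL_i}\alpha^l_i p^l$ has $f'(x_i)=0$ and is strictly concave, hence increasing on $(0,x_i)$; these are the same first-order-plus-concavity argument.
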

\begin{proof}
Because of Lemma \ref{lemcmp}, the only non-zero term in $ t_i $ at equilibrium is $ x_i \sum_{l \in \mL_i} \alpha^l_i p^l $, which is clearly non-negative. Hence $ \sum_{i \in \mN} t_i \ge 0 $ at equilibrium. This is the seller's individual rationality condition.

Now if  $ x_i = 0$ then we know from Lemma \ref{lemcmp} and  \eqref{EQt} that $ t_i = 0 $ and so \eqref{EQIRB} is evident. Now take $ x_i > 0 $ and define the function
\begin{equation}
f(z) = v_i(z) - z \sum_{l \in \mL_i} \alpha^l_i p^l.
\end{equation}
Note that $ f(0) = u_i(0,0) $ and $ f(x_i) = u_i(x,t) $, the utility at equilibrium. Since $ f^{\prime}(x_i) = 0 $ (Lemma \ref{lemStat}), we see that $ \forall $ $ 0 < y < x_i $, $ f^{\prime}(y) > 0 $ since $ f $ strictly concave (because of $ v_i $). This clearly tells us $ f(x_i) \ge f(0) $.
\end{proof}

Now that we have Lemmas characterising NE in the same way as KKT conditions (and individual rationality), we can compare them to prove Theorem \ref{thmain}.

\begin{proof}[Proof of Theorem \ref{thmain}]
We know that the four KKT conditions produce a unique solution $ x^{\star} $ (and corresponding $ \lambda^{\star} $). For the game $ \mathcal{G} $, from Lemmas \ref{lemPF}--\ref{lemStat} we can see that at any NE, allocation $ x $ and prices $ \novec{p} $ satisfy the same conditions as the four KKT conditions and hence they give a unique $ x =x^{\star} $, as long as (A4) is satisfied. So we have that the allocation is $ x^{\star} $ across all NE.
This combined with individual rationality Lemma~\ref{lem_ir}, proves Theorem \ref{thmain}.
\end{proof}

\section{A Mechanism with Strong Budget Balance} \label{secmechown}

We now turn our attention to problem \eqref{CP2}. So in this case we have the agents in $ \mN $, who are the owners and users that wish to allocate the good amongst themselves in a way that maximises $ \sum_{i\in\mN} u_i $. In this case one can now think of taxes as a way of facilitating efficient redistribution of the already available good. Since all payments are made amongst agents in $ \mN $ and we have quasi-linear utilities, this clearly tells us that $ \sum_{i\in\mN} t_i $ must be zero. This interpretation is slightly different from Section \ref{secmech}, where taxes were indeed payments made to the seller for provisioning of the good.

All of the above is required to be done again under the assumption of strategic users, which means the designer (who is a third party) still has the problem of information elicitation and moreover has to make sure that the wealth has to be redistributed in a way that we still get $ x^{\star} $ allocation at the all equilibria. Here we will say that the mechanism fully implements maximising social welfare allocation if in addition to the previous conditions, we also have SBB.

\subsection{Information assumptions} These are the same as Section \ref{secmech}.

\subsection*{} For creating a mechanism in this formulation, main difference with the previous section, is that we have to find a way of redistributing the total tax paid by all the agents. In the last section we saw that the total payment made at the equilibrium is
\begin{equation} \label{EQtotpay}
 B= \sum_{i \in \mN} \left( x_i \sum_{l \in \mL_i} \alpha^l_i p^l \right) = r \sum_{i \in \mN} \left( y_i \sum_{l \in \mL_i} \alpha^l_i p^l \right)
\end{equation}
since all other tax terms were zero at equilibrium.
We will redistribute taxes by modifying tax function for each agent in such a way that he only uses messages from other agents.
This has the advantage of keeping our equilibrium calculations in line with Section \ref{secmech}, since deviations by an agent wouldn't affect his utility through this additional term. In view of this, we can express $B$ as follows
\begin{equation} \label{EQALTREP}
B = r \sum_{i \in \mN} \left( \sum_{l \in \mL_i} \frac{p^l}{N^l - 1} \sum_{j \in \mN^l \backslash \{i\}} \alpha^l_j y_j \right),
\end{equation}
where each term of the outer summation depends only on demands of agents other than the $ i^{th} $ one.
This means that each term in the parenthesis (scaled by the factor $r$) can now be used as the desired additional tax for user $i$.
Observe, however,  that in our mechanism, each agent's demand affects the factor $r$ as well.
So, if all agents can agree on value of $ r $ then we can use that signal
to create the term that facilitates budget balance.

In lieu of this, our mechanism here works by asking for an additional signal $ \rho_i $ from every agent and imposing an additional tax of $ (\rho_i - r)^2 $, thereby essentially ensuring that all agents agree on the value of $ r $ (via $ \rho_i $'s) at equilibrium. Finally, we use $ \bar{\rho}_{-i} $ (cf. \eqref{EQRHO}) as a proxy for $ r $ in \eqref{EQALTREP} - just like we did with $ \bar{p}_{-i}^l $'s.

\subsection{Mechanism}
Now the actions sets $ \mS_i $ for agents will be $ \mathbb{R}_{+} \times \mathbb{R}_{+}^{L_i} \times \mathbb{R}_{+} $ and actions will look like $ s_i = (y_i, p_i, \rho_i) $.

The designer announces the contract $ h: \mS \rightarrow \mathbb{R}_{+}^N \times \mathbb{R}^N $ and asks each agent to submit their message $s_i= (y_i, p_i, \rho_i) $. Then he makes allocations and taxes based on the contract for each agent $ i \in \mN $ exactly as in the WBB case, with the only exception that the tax is now defined as
%
\begin{gather}\label{EQt2}
h_{t,i}(s) = t_i = \sum_{l \in \mL_i} t_i^l + \zeta (\rho_i - r)^2   \\
t_i^l =  x_i \alpha^l_i \bar{p}_{-i}^l  + (p_i^l -  \bar{p}_{-i}^l)^2  +  \eta \, \bar{p}_{-i}^l (p_i^l -  \bar{p}_{-i}^l)(c^l - \sum_{j \in \mN^l} \alpha^l_jx_j) -  \frac{ \bar{\rho}_{-i} \bar{p}_{-i}^l}{ N^l - 1} \sum_{j \in \mN^l \backslash \{i\} } \alpha^l_j y_j.
\end{gather}
where $ \zeta,\eta $ are small enough positive constants. Here $ \bar{p}_{-i}^l $ is as defined in \eqref{EQpm} and
\begin{gather} \label{EQRHO}
\bar{\rho}_{-i} \coloneqq \frac{1}{N-1} \sum_{j \in \mN \backslash \{i\}} \rho_j.
\end{gather}
Here we will call the corresponding game $ \mathcal{G}_0 $, for which utilities will be
\begin{equation}
\hat{u}_i(s) = v_i(x_i) - t_i = v_i(h_{x,i}(s)) - h_{t,i}(s) \qquad \forall~~i \in \mN
\end{equation}

We will now move on to results section and discuss the implications of the modifications there.

\subsection{Results}

With this new mechanism, we will again have full implementation (note that for individual rationality there is no seller here). The only term in $ \hat{u}_i $ that is affected by $ \rho_i $ is $ - \zeta (\rho_i - r)^2 $, so all the Lemmas from Section \ref{secmech} will go through with minor modifications and we will have our main result using the same line of argument as for Theorem \ref{thmain}. Note here that, terms in $ \hat{u}_i $ affected by $ p_i^l $'s are the same as before but for $ y_i $ there is a new term $ - \zeta (\rho_i - r)^2 $ which is affected by it.

\begin{theorem}[Full Implementation]\label{thmain2}
For game $ \mathcal{G}_0 $, there is a unique allocation, $ x $, corresponding to all NE. Moreover, $ x = x^{\star} $, the maximiser of \eqref{CP}, where individual rationality is satisfied for all agents in $ \mN $.
\end{theorem}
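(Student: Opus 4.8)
The plan is to mirror the proof of Theorem~\ref{thmain} essentially verbatim, isolating the two structural changes of $\mathcal{G}_0$: the extra signal $\rho_i$ with penalty $\zeta(\rho_i-r)^2$, and the redistribution term $-\frac{\bar{\rho}_{-i}\,\bar{p}_{-i}^l}{N^l-1}\sum_{j\in\mN^l\backslash\{i\}}\alpha^l_j y_j$ in \eqref{EQt2}. The first observation I would record is that this redistribution term depends only on the \emph{other} agents' messages $s_{-i}$ (through $\bar{\rho}_{-i}$, $\bar{p}_{-i}^l$, and the demands $y_j$ with $j\neq i$), so it is constant with respect to any unilateral deviation by $i$ and drops out of every first-order condition for $s_i$. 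Since the allocation rule is unchanged, Lemma~\ref{lemPF} (primal feasibility) holds verbatim, and the price-only deviation arguments of Lemmas~\ref{lemcmp}, \ref{lemDF} and \ref{lemCompSlac} carry over unchanged: none of them perturbs $y_i$ or $\rho_i$, and the $\rho$-penalty is unaffected by a change in $p_i^l$.

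For the stationarity analog of Lemma~\ref{lemStat} I would treat $\rho_i$ separately. Because $\hat{u}_i$ is smooth in $\rho_i$, any NE must satisfy $\partial\hat{u}_i/\partial\rho_i=-2\zeta(\rho_i-r)=0$, hence $\rho_i=r$ and $\bar{\rho}_{-i}=r$ at equilibrium. Taking the one-sided derivatives in $y_i$ (which still exist by the same kink analysis as in Lemma~\ref{lemStat}), the only new contribution comes from $-\zeta(\rho_i-r)^2$, namely $2\zeta(\rho_i-r)\,\partial r/\partial y_i$, and this vanishes since $\rho_i=r$; the redistribution term contributes nothing as it is $y_i$-independent. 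The $y_i$-stationarity therefore collapses to the identical KKT condition $v_i^{\prime}(x_i)=\sum_{l\in\mL_i}\alpha_i^l p^l$ (with inequality when $x_i=0$), so every NE satisfies the KKT system and the allocation is the unique $x^{\star}$.

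Existence (the analog of Lemma~\ref{lemexis}) is where I expect the real work. I would take the candidate $y=x^{\star}$, $p_i=\lambda^{\star}$ for all agents, and $\rho_i=r$, check that the gradient vanishes, and verify that the now $(L_i+2)\times(L_i+2)$ Hessian of $\hat{u}_i$ in $s_i=(y_i,p_i,\rho_i)$ is negative definite for $\eta,\zeta$ small. The new block has $\partial^2\hat{u}_i/\partial\rho_i^2=-2\zeta$, zero coupling between $\rho_i$ and the $p_i^l$, a single coupling $\partial^2\hat{u}_i/\partial\rho_i\partial y_i=2\zeta\,\partial r/\partial y_i$, and an added term $-2\zeta(\partial r/\partial y_i)^2$ in $u_{yy}$. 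The key computation is that the $(y_i,\rho_i)$ sub-determinant equals $-2\zeta\,u_{yy}^{(0)}>0$, where $u_{yy}^{(0)}=v_i^{\prime\prime}(x_i)(\partial x_i/\partial y_i)^2<0$ is the part of $u_{yy}$ without the $\rho$-contribution: the two $\zeta^2(\partial r/\partial y_i)^2$ terms cancel, so this block is negative definite for \emph{every} $\zeta>0$. Since $\rho_i$, like each $p_i^l$, couples only through the first coordinate $y_i$, I would extend the characteristic-polynomial / generic-matrix argument of Lemma~\ref{lemexis} to this augmented arrowhead matrix and conclude negative definiteness for $\eta,\zeta$ small enough, making the candidate the unique global maximum, hence a NE. This Hessian step is the main obstacle, but the cancellation above makes the extension routine.

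Finally I would close the two remaining points. For individual rationality (there is no seller now), note that at equilibrium the redistribution term is a subsidy, i.e. $\le 0$ as a tax (all of $\bar{\rho}_{-i}=r$, $\bar{p}_{-i}^l=p^l$, $\alpha^l_j$, $y_j$ are nonnegative), so with $f$ as in Lemma~\ref{lem_ir} we get $u_i(x,t)\ge f(x_i)\ge f(0)=u_i(0,0)$. For strong budget balance, using equal prices and $\rho_i=r$ the subsidy to agent $i$ is $\sum_{l\in\mL_i}\frac{r\,p^l}{N^l-1}\sum_{j\in\mN^l\backslash\{i\}}\alpha^l_j y_j$; summing over $i$ and invoking the identity \eqref{EQALTREP} shows the total subsidy equals $B=\sum_{i\in\mN} x_i\sum_{l\in\mL_i}\alpha_i^l p^l$, which is exactly the sum of the WBB tax terms, while the $\zeta$-penalties vanish, giving $\sum_{i\in\mN}t_i=B-B=0$. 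Combining the KKT characterization, existence, individual rationality and SBB, the theorem follows exactly as Theorem~\ref{thmain}.
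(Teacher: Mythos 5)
Your proposal is correct and follows essentially the same route as the paper: carry over the primal/dual feasibility, equal-price, and complementary-slackness arguments (noting the redistribution term depends only on $s_{-i}$), establish $\rho_i = r$ at every NE, show the $y_i$-stationarity reduces to the same KKT condition, prove existence via the augmented $(L_i+2)\times(L_i+2)$ arrowhead Hessian with $\eta,\zeta$ small, and close with individual rationality and SBB (your use of \eqref{EQALTREP} is just the paper's coefficient-of-$x_k$ computation in disguise). Your observation that the $\zeta^2(\partial r/\partial y_i)^2$ terms cancel exactly in the $(y_i,\rho_i)$ block is a small refinement over the paper, which instead keeps the $-\zeta^2 e^2/\vert y\vert^4$ term in the characteristic polynomial and dominates it by taking $\zeta$ small, but this does not change the overall argument.
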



In addition to all the properties from Section \ref{secmech}, here we will characterise $ \rho_i $'s at equilibrium and then go on to show SBB at equilibrium.

Instead of proving the results from Section \ref{secmech} for this new mechanism, we will outline their proofs and only show the rigorous proofs for new properties.
\begin{itemize}
\item Primal Feasibility - Since allocation function is the same as before, this result holds here as well.

\item Equal Prices at equilibrium - This was proved by taking price deviations only and keeping other parameters of the signal constant, so the same argument works here as well (noting that no new price related terms have been added in the new mechanism).
\end{itemize}
Before moving on to other results, we will show common $ \rho_i $'s at equilibrium.

\begin{lemma} \label{lemcmr}
At any NE $ s = (y, p, \rho) $ of game $ \mathcal{G}_0 $, we have $ \rho_i = r $ $ ~\forall $ $ i \in \mN $.
\end{lemma}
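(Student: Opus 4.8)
The plan is to exploit the fact that the new signal $\rho_i$ enters agent $i$'s own payoff through exactly one term, so the claim follows from a one-dimensional best-response argument identical in spirit to the equal-price property of Lemma~\ref{lemcmp}. First I would inspect the tax \eqref{EQt2} and record the key decoupling observation: the only $\rho_i$-dependent contribution to $\hat{u}_i = v_i(x_i) - t_i$ is the penalty $-\zeta(\rho_i - r)^2$. Indeed, the scaling factor $r$ defined in \eqref{EQx0}--\eqref{EQA1} is a function of the demand profile $y$ alone and does not involve any $\rho_j$, while the averaged quantity $\bar{\rho}_{-i}$ appearing in $t_i^l$ is, by \eqref{EQRHO}, an average over agents $j \ne i$ and is therefore independent of $\rho_i$. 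Hence a unilateral change in $\rho_i$, holding $y_i$, $p_i$ and $s_{-i}$ fixed, leaves $r$, every allocation, and every other tax term unchanged.

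Next I would fix an arbitrary NE $s = (y,p,\rho)$ and analyse agent $i$'s best response in the $\rho_i$ coordinate with all remaining coordinates frozen. Since the only $\rho_i$-dependent part of $\hat{u}_i$ is $-\zeta(\rho_i - r)^2$ with $\zeta > 0$, this is a strictly concave function of $\rho_i$ whose unconstrained maximiser is $\rho_i = r$. A short feasibility check is in order: the message space requires $\rho_i \in \mathbb{R}_{+}$, and from \eqref{EQA1} each finite $r^l$ is strictly positive on a used link, so $r = \min_{l} r^l \ge 0$ and the maximiser $\rho_i = r$ is an admissible message.

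The equilibrium condition then closes the argument. If $\rho_i \ne r$ at the candidate NE, the deviation to $\rho_i' = r$ strictly increases $\hat{u}_i$ while altering nothing else in agent $i$'s payoff, contradicting that $s_i$ is a best response to $s_{-i}$. Therefore $\rho_i = r$, and repeating over all $i \in \mN$ yields $\rho_i = r$ for every agent.

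I do not expect a genuine obstacle: the result holds essentially by construction of the quadratic coupling term $\zeta(\rho_i - r)^2$, which was introduced precisely to force consensus on the value of $r$. The single point deserving explicit emphasis is the decoupling claim of the first step---that $r$ depends only on $y$ and that $\bar{\rho}_{-i}$ omits $\rho_i$---since this is exactly what makes the $\rho_i$-deviation \emph{clean}, leaving the allocation and all other tax components untouched so that the comparison with Section~\ref{secmech} goes through unchanged.
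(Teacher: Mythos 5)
Your proof is correct and follows essentially the same route as the paper: both argue that $\rho_i$ affects $\hat{u}_i$ only through the term $-\zeta(\rho_i - r)^2$ (since $r$ depends only on $y$ and $\bar{\rho}_{-i}$ excludes $\rho_i$), so a unilateral deviation to $\rho_i' = r$ is strictly profitable whenever $\rho_i \ne r$, contradicting the NE condition. Your added remarks on the decoupling and on feasibility of the deviation ($r \ge 0$) are fine elaborations of what the paper leaves implicit.
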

\begin{proof}
Suppose not, i.e. assume $ \exists $ $ j \in \mN $ such that $ \rho_j \ne r $. In this case agent $ j $ can deviate with only changing $ \rho_j^{\prime} = r $ (which also means $ r $ is the same as before deviation, since demand $ y $ doesn't change). It's easy to see that this is a profitable deviation, since change in utility of agent $ j $ will be only through the term involving $ \rho_j $.
\begin{equation}
\Delta \hat{u}_j = - \zeta (\rho_j^{\prime} - r)^2 + \zeta (\rho_j - r)^2  = \zeta (\rho_j - r)^2 > 0
\end{equation}
\end{proof}
Note however that although $ \rho_i $ are same for all $ i $ at any equilibrium, that common value, $r$,  will be different across equilibria. This is obvious since magnitude of vector $ y $ changes across equilibria.

Now we move on with properties from Section \ref{secmech}.
\begin{itemize}
\item Dual Feasibility - This is obvious here as well.

\item Complimentary Slackness - This was proved by taking only price deviations and hence the same argument works here as well.



\item Stationarity - Now the additional term in the derivative here will be
\begin{equation}
\frac{\partial \hat{u}_i}{\partial y_i^{\prime}} \Big \vert_{new} = \underbrace{\frac{\partial \hat{u}_i}{\partial y_i^{\prime}} \Big \vert_{old}}_{T_1} - \underbrace{2 \zeta (\rho_i^{\prime} - r^{\prime}) \left(-\frac{\partial r^{\prime}}{\partial y_i^{\prime}}\right)}_{T_2}
\end{equation}
So we claim as before that if $ T_1 $ is positive, we can increase $ y_i^{\prime} $ from $ y_i $ to be better-off. Here however we would have to make sure that agent $ i $ deviates with $ \rho_i^{\prime} $ simultaneously to make it equal to $ r^{\prime} $, so that the contribution of the $ T_2 $ term to the derivative is zero. The only thing left to notice here is that the change in $ \rho_i^{\prime} $ is such that not only the term $ T_2 $ is zero but also that the contribution of term $ - \zeta (\rho_i^{\prime} - r^{\prime})^2 $ to the utility is zero before and after deviation - so this deviation doesn't change other partial derivatives. Similar argument also works when $ T_1 $ is negative and we get the stationarity property here as well.
\end{itemize}
With this we will have unique allocation at every equilibria, since solution to KKT is unique (as far as allocation is concerned). This unique allocation will be $ x^{\star} $; also the prices will be $ \lambda^{\star} $, same as before.

Now we verify the existence of equilibria. The arguments here will be similar to the ones in the proof of Lemma \ref{lemexis}. First order conditions can again be shown to be satisfied, the only difference is that here we will also use $ \rho_i = r $ at local extremum. The Hessian $ H $ here, for agent $ i $, will be of order $ ( L_i +2) \times (L_i +2) $ where 1st, 2nd row and column represent $ y_i $, $ \rho_i $ respectively whereas the remaining rows and columns represent $ p_i^l $'s. The generic matrix $ A = \{a_{ij}\} $ for $ H $ will then be
\begin{gather}
a_{11} = -\frac{a}{\vert y \vert^2} - \zeta \frac{d}{\vert y \vert^4} \qquad a_{12} = a_{21} = - \zeta \frac{e}{\vert y \vert^2} \qquad  a_{ij} = a_{ji} = 0 \qquad \forall \quad i,j > 1,~~ i \ne j  \\
a_{22} = -2 \qquad a_{ii} = -2  \qquad a_{1i} = a_{i1} = \eta  \frac{b_{i-1}}{\vert y \vert} \qquad \forall \quad 3 \le i \le L_i + 2
\end{gather}
where $ a,d,e > 0 $. Writing the characteristic equation we will again get that $ -2 $ is a repeated eigenvalue, $ L_i $ times. And the equation for remaining two roots is
\begin{equation}
\lambda^2 + \lambda \left(2 + \frac{a}{\vert y \vert^2} + \zeta \frac{d}{\vert y \vert^4}\right) + \left(\frac{2a}{\vert y \vert^2} + \zeta\frac{2d}{\vert y \vert^4} - \zeta^2 \frac{e^2}{\vert y \vert^4} + \eta^2 \frac{C}{\vert y \vert^2}\right) = 0
\end{equation}
Necessary and sufficient conditions for the roots of above quadratic to be negative are again that coefficient of $ \lambda $ and the constant term are both positive. Coefficient of $ \lambda $ is clearly positive, and the constant term can also be made positive by choosing $ \zeta,\eta $ small enough, irrespective of sign of $ C $. Hence here also we get NE for all $ y $ (along a fixed direction) for $ \zeta,\eta $ chosen to be small enough.

\begin{itemize}
\item Individual Rationality - This is obvious in here because we are only redistributing money from the previous case, so if the mechanism there was individually rational it will be here too.
\end{itemize}

\begin{lemma}[Strong Budget Balance]
At any NE $ s = (y,p, \rho) $ of game $ \mathcal{G}_0$, with corresponding taxes $ (t_i)_{i \in \mN} $, we have $ \sum_{i \in \mN} t_i = 0 $.
\end{lemma}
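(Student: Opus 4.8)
The plan is to evaluate each tax $t_i$ directly at an arbitrary NE, using the equilibrium characterizations already established, and then sum over all agents to show the total collapses to zero. Concretely, I would first invoke the equal-prices property (which carries over from Lemma~\ref{lemcmp}) so that $\bar{p}_{-i}^l = p^l$ for every $i \in \mN^l$, and Lemma~\ref{lemcmr} so that $\rho_i = r$ for all $i$, whence $\bar{\rho}_{-i} = r$. With these substitutions most of the tax terms vanish: the quadratic price-penalty $(p_i^l - \bar{p}_{-i}^l)^2$ and the $\eta$-term each contain a factor $(p_i^l - \bar{p}_{-i}^l) = 0$, and the new $\zeta(\rho_i - r)^2$ term is zero as well. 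This leaves the clean expression
\[
t_i = \sum_{l \in \mL_i} \left( x_i \alpha_i^l p^l - \frac{r\,p^l}{N^l - 1} \sum_{j \in \mN^l \setminus \{i\}} \alpha_j^l y_j \right).
\]

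Next I would sum over $i \in \mN$ and interchange the order of summation using $l \in \mL_i \Leftrightarrow i \in \mN^l$, so that the outer sum runs over links $l \in \mL$ and the inner over agents $i \in \mN^l$. The first group of terms then gives $\sum_{l \in \mL} p^l \sum_{i \in \mN^l} \alpha_i^l x_i$. For the second (redistribution) group, the main computational step is the counting identity
\[
\sum_{i \in \mN^l} \sum_{j \in \mN^l \setminus \{i\}} \alpha_j^l y_j = (N^l - 1) \sum_{j \in \mN^l} \alpha_j^l y_j,
\]
which holds because each fixed $j \in \mN^l$ is omitted exactly once (when $i = j$) and hence contributes to $N^l - 1$ of the inner sums. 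The factor $N^l - 1$ cancels the denominator, reducing the redistribution total to $\sum_{l \in \mL} r\,p^l \sum_{j \in \mN^l} \alpha_j^l y_j$. Finally, substituting $x_i = r y_i$ into the first group shows it equals exactly this same quantity, so the two groups cancel and $\sum_{i \in \mN} t_i = 0$.

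I do not expect any genuine obstacle here: the redistribution term in \eqref{EQt2} was engineered from the alternate representation \eqref{EQALTREP} of the total payment $B$ precisely so that each agent's extra tax is a proxy (using only others' demands together with the common price and scaling signals) for one summand of $B$, and summing recovers $B$ itself with the opposite sign. The only point demanding care is the bookkeeping in the interchange of summation and the counting identity above; the argument is valid at every NE because it relies solely on the equal-price and common-$\rho$ properties, which hold at all equilibria regardless of which link constraints are active.
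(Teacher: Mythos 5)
Your proposal is correct and follows essentially the same route as the paper: at any NE the equal-price property and Lemma~\ref{lemcmr} kill every tax term except the payment term $x_i\sum_{l\in\mL_i}\alpha_i^l p^l$ and the redistribution term, and then a counting argument (each agent is omitted from exactly one of the $N^l-1$-agent sums per link) makes the two groups cancel. The only cosmetic difference is bookkeeping: the paper extracts the coefficient of each $x_k$ and shows it vanishes, while you interchange the order of summation link-by-link and use $x_i = r y_i$; these are arithmetically equivalent.
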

\begin{proof}
We now know that price vectors are equal at equilibrium for all agents and so we can write
\begin{gather}
\sum_{i\in\mN} t_i = \sum_{i\in\mN} {x}_i \left (\sum_{l \in \mL_i} \alpha^l_i p^l \right) - r \sum_{l \in \mL_i} \frac{{p}^l}{ N^l - 1} \sum_{j \in \mN^l \backslash \{i\}} \alpha^l_j y_j\\
\label{EQsbb}
\Rightarrow \quad \sum_{i\in\mN} t_i = \sum_{i\in\mN} \sum_{l\in\mL_i} \left( {x}_i \alpha^l_i p^l - \frac{p^l}{ N^l -1} \sum_{j \in \mN^l \backslash \{i\}} \alpha^l_j {x}_j \right)
\end{gather}
Consider the coefficient of $ {x}_k $ for any agent $ k $ in the above expression
\begin{align}
\sum_{l\in\mL_k} \left( \alpha^l_k p^l - \sum_{i \in \mN^l \backslash \{k\}} \frac{p^l}{N^l - 1} \alpha^l_k \right)
 &= \sum_{l\in\mL_k} \left(  \alpha^l_k p^l - \frac{p^l \alpha^l_k}{N^l - 1}  \left (N^l - 1 \right )   \right) \\
 &=0,
\end{align}
which proves the claim.
\end{proof}

\begin{proof}[Proof of Theorem \ref{thmain2}]
So by the preceding properties, we get allocation $ x^{\star} $, prices $ \lambda^{\star} $ at all equilibria. Then SBB and individual rationality give us the desired full implementation.
\end{proof}



\section{Discussion and Generalizations} \label{secgen}

\subsection*{Relevant Literature}
The problem considered in \cite{demos} is essentially equivalent to ours (with the additional property of
SBB on and off equilibrium and the relaxed assumptions (A3)-(A4)).
However, as it turns out the claim of implementation made in~\cite{demos} (property (P1)) is not valid.
In particular the proof of~\cite[Theorem~5]{demos}  (specifically eq.~(64)) is incorrect,
since the utilities need not have zero derivatives at equilibrium, since they are
discontinuous at equilibrium and the only allowable deviations of $x_i$ are downwards deviations.
Unfortunately, this is a fundamental problem with the proposed mechanism in~\cite{demos} and not
merely a fixable error in the proof. In addition, there is no set of relaxed assumptions for which
the proposed mechanism can implement the solution of the~\eqref{CP}.
Intuitively, that mechanism fails to achieve the claim of implementation because of using hard constraints for
ensuring feasibility of allocation: when the demanded allocation is not feasible, a large penalty
is imposed on the agents.
This approach creates discontinuities of the utility functions at the boundaries of the achievable
region and thus renders invalid any attempt to link
the corresponding NE with the KKT conditions of the corresponding centralized problem.
Indeed, one of the main contributions of our work in this report is embedding the
constraints within the mechanism in an implicit way, such that the allocations are always
feasible (on and off equilibrium), and  are continuous and (piecewise) differentiable with
respect to the demands.
The authors of~\cite{demos} have suggested an alternative approach of overcoming
these difficulties in~\cite{demoscorrection}.

The allocation that we use in our work can be referred to as proportional allocation, since it gives each
agent an allocation which is proportional to their demand albeit weighed down by the total demand - in a way
that respects the constraints.
This idea was used in the case of 1 link ($ L=1 $) in \cite{hajek,basar}, with a different payment scheme.
Both these papers achieve partial (not full) implementation, in the sense that there exists at least one
NE which gives the required allocation.

\subsection*{Strong Budget Balance off-equilibrium}
In this work, we do not view SBB off equilibrium as an important property of a mechanism.
In fact one may suggest that if any property should hold on and off equilibrium that would be
feasibility, since otherwide the network wouldn't be able to operate at all if equilibrium is not reached.
However, in the following we sketch a modification of the proposed mechanism that results in arbitrarily close to SBB even off equilibrium.
In Section \ref{secmechown}, we use $ \bar{\rho}_{-i} $'s simply as a way to get SBB at equilibrium. Here $ \bar{\rho}_{-i} $ was used as a proxy for $ r $, since we knew that at equilibrium we will have $ \bar{\rho}_{-i} = r $. We could, in addition to this, also use $ \bar{\rho}_{-i} $ as a proxy for $ r $ in the allocation i.e. $ x_i = \bar{\rho}_{-i} y_i $. Although we won't have feasibility of allocation off-equilibrium, this will ensure that the first term (payment) and fifth term in tax function cancel out when we sum over all agents - on or off-equilibrium. This will give us something close to SBB at all points in the message space $ \mS $ and not just at equilibria - for this all we have to notice is that in \eqref{EQt2} we could introduce any positive constant in front of terms 2,3 and 4 and all the results would still go through. So by making that constant small enough we could restrict the contribution of those terms to $ \sum_{i \in \mN} t_i $, which we couldn't do with terms 1 and 5
since term 1
compares with $ v_i $, for which we do not know the scaling and term 5 is introduced to cancel out term 1 when we sum over all agents.

\bibliographystyle{IEEEtran}
\bibliography{abhinav}

\begin{thebibliography}{1}
\providecommand{\url}[1]{#1}
\csname url@samestyle\endcsname
\providecommand{\newblock}{\relax}
\providecommand{\bibinfo}[2]{#2}
\providecommand{\BIBentrySTDinterwordspacing}{\spaceskip=0pt\relax}
\providecommand{\BIBentryALTinterwordstretchfactor}{4}
\providecommand{\BIBentryALTinterwordspacing}{\spaceskip=\fontdimen2\font plus
\BIBentryALTinterwordstretchfactor\fontdimen3\font minus
  \fontdimen4\font\relax}
\providecommand{\BIBforeignlanguage}[2]{{%
\expandafter\ifx\csname l@#1\endcsname\relax
\typeout{** WARNING: IEEEtran.bst: No hyphenation pattern has been}%
\typeout{** loaded for the language `#1'. Using the pattern for}%
\typeout{** the default language instead.}%
\else
\language=\csname l@#1\endcsname
\fi
#2}}
\providecommand{\BIBdecl}{\relax}
\BIBdecl

\bibitem{reiter}
S.~Reichelstein and S.~Reiter, ``Game forms with minimal message spaces,''
  \emph{Econometrica: Journal of the Econometric Society}, pp. 661--692, 1988.

\bibitem{groves}
T.~Groves and J.~Ledyard, ``Optimal allocation of public goods: A solution to
  the" free rider" problem,'' \emph{Econometrica: Journal of the Econometric
  Society}, pp. 783--809, 1977.

\bibitem{demos}
A.~Kakhbod and D.~Teneketzis, ``An efficient game form for unicast service
  provisioning,'' \emph{Automatic Control, IEEE Transactions on}, vol.~57,
  no.~2, pp. 392--404, 2012.

\bibitem{demoscorrection}
\BIBentryALTinterwordspacing
------, ``Correction to ``{A}n efficient game form for unicast service
  provisioning",'' 2013. [Online]. Available:
  \url{http://arxiv.org/abs/0910.5502}
\BIBentrySTDinterwordspacing

\bibitem{hajek}
S.~Yang and B.~Hajek, ``Revenue and stability of a mechanism for efficient
  allocation of a divisible good,'' \emph{preprint}, 2005.

\bibitem{basar}
R.~T. Maheswaran and T.~Basar, ``Social welfare of selfish agents: motivating
  efficiency for divisible resources,'' in \emph{Decision and Control, 2004.
  CDC. 43rd IEEE Conference on}, vol.~2.\hskip 1em plus 0.5em minus 0.4em\relax
  IEEE, 2004, pp. 1550--1555.

\end{thebibliography}

\end{spacing}
\end{document}